\documentclass[journal, letterpaper]{IEEEtran}
\IEEEoverridecommandlockouts

\PassOptionsToPackage{dvipsnames}{xcolor}

\usepackage{pgfplots}
\pdfoutput=1

\usepackage[utf8]{inputenc}
\usepackage{amsmath}
\usepackage{amsfonts}
\usepackage{amssymb}
\usepackage{amsthm}

\usepackage{algorithm2e}
\usepackage{capt-of}

\usepackage{pgfplotstable}
\usepackage{comment}
\usepackage{graphicx}
\usepackage{sidecap} 
\usepackage{physics}
\usepackage{mathtools}
\usepackage{dsfont}
\pgfplotsset{compat=1.18}
\usepackage{tikz}

\usepackage[colorlinks]{hyperref}
\hypersetup{
    colorlinks  = true,
    citecolor   = PineGreen,
    linkcolor   = MidnightBlue,
    urlcolor    = PineGreen
}

\usepackage[capitalise]{cleveref}

\graphicspath{{figures/}{./}}

\usepackage{thmtools}
\declaretheorem[parent=section]{theorem}

\declaretheorem[numberlike=theorem, name=Lemma]{lemma}

\def\tr{\operatorname{tr}}

\def\ket#1{|#1\rangle}

\def\c2{\ensuremath{C^{(2)}}}

\allowdisplaybreaks

\pgfplotstableread{
lam                        eps                         r
239.999999997135          0.00175494271616117         312
183.15789473691           0.00168098203758404         273
148.085106382963          0.00160696511264291         245
124.285714285717          0.00153294796284342         225
107.076923076919          0.00145892695059047         209
94.0540540540426          0.00138491369123983         196
83.8554216867482          0.0013108965505495          185
75.6521739130454          0.0012368794102855          176
68.9108910891046          0.00116286226896989         168
63.272727272725           0.00108884284691158         161
58.4873949579805          0.00101482798835661         155
54.3750000000008          0.000940807259219412        149
50.8029197080297          0.000866793706585156        144
47.6712328767124          0.000792775288033454        140
44.9032258064504          0.000718758260525476        136
42.439024390244           0.000644742281934896        132
40.2312138728327          0.000570725121825433        128
38.2417582417587          0.000496708003765889        125
36.4397905759146          0.000422690857007524        122
34.8000000000002          0.000348673721056514        120
33.3014354066992          0.000274656581048305        117
31.9266055045872          0.000200639429898786        115
30.6607929515419          0.000126622301984014        112
29.4915254237285          5.26051593610077e-05        110
28.4081632652908          2.09496957239708e-05        108
27.4015748031434          9.33688534452415e-05        106
26.4638783269942          0.000165788005993761        104
25.5882352941165          0.000238207151028291        103
24.7686832740202          0.000310626294325544        101
24                         0.000383045443996366        99
}\datatable

\pgfplotstableread{
x                       y
0.01                    2.755737054959
0.0131034482758621      2.774436927262
0.0162068965517241      2.793993551707
0.0193103448275862      2.814472587372
0.0224137931034483      2.835966452966
0.0255172413793103      2.858577291283
0.0286206896551724      2.882431579323
0.0317241379310345      2.907672639902
0.0348275862068966      2.934471720670
0.0379310344827586      2.963034797603
0.0410344827586207      2.993607563700
0.0441379310344828      3.026499340262
0.0472413793103448      3.062084250565
0.0503448275862069      3.100849895907
0.053448275862069       3.143417151026
0.056551724137931       3.190613848001
0.0596551724137931      3.243573010524
0.0627586206896552      3.303898841891
0.0658620689655172      3.373977146053
0.0689655172413793      3.457580783425
0.0720689655172414      3.561209990385
0.07517241379310351     3.697583714861
0.0782758620689655      3.897489795301
0.08137931034482759     4.278971659418
0.0844827586206897      4.678822280400
0.08758620689655169     4.029797974228
0.0906896551724138      3.780446891875
0.09379310344827591     3.623045205047
0.0968965517241379      3.507761784290
0.1                     3.416749698861
}\datatableB

\pgfplotstableset{
  create on use/x/.style={
    create col/expr={sqrt(\thisrow{lam}/\thisrow{eps})}
  }
}

\begin{document}

\title{Semidefinite optimization of the\\quantum relative entropy of channels}
\date{\today}

\author{Gereon Ko{\ss}mann and Mark M. Wilde \thanks{Gereon Ko{\ss}mann is with the Institute for Quantum Information, RWTH Aachen University, Aachen, Germany (email: kossmann@physik.rwth-aachen.de), and Mark M. Wilde is with the School of Electrical and Computer Engineering, Cornell University, Ithaca, New York 14850, USA (email: wilde@cornell.edu).}}

\maketitle

\begin{abstract}
    This paper introduces a method for calculating the quantum relative entropy of channels, an essential quantity in quantum channel discrimination and resource theories of quantum channels. By building on recent developments in the optimization of relative entropy for quantum states [Ko{\ss}mann and Schwonnek, arXiv:2404.17016], we introduce a discretized linearization of the integral representation for the relative entropy of states, enabling us to handle maximization tasks for the relative entropy of channels. Our approach here extends previous work on minimizing relative entropy to the more complicated domain of maximization. It also provides efficiently computable upper and lower bounds that sandwich the true value with any desired precision, leading to a practical method for computing the relative entropy of channels.   
\end{abstract}
 
\tableofcontents

\section{Introduction}

Relative entropy is a well-established concept in both  classical and quantum information theory~\cite{Tomamichel_2016,hayashi2017QuantumInformationTheory,wilde2017QuantumInformationTheory,watrous2018TheoryQuantumInformation,HOLEVO-CHANNELS-2,khatriwilde2024}, where it  measures the deviation or divergence between two probability distributions in the classical case or between  two quantum states in the quantum case; it has a precise operational meaning in quantum hypothesis testing as the optimal exponential rate at which the type~II error probability decays to zero if there is a constant constraint on the type~I error probability~\cite{hiai1991ProperFormulaRelative,nagaoka2000StrongConverseSteins}.  The relative entropy of two quantum states $\rho,\sigma \in \mathcal{S}(\mathcal{H})$, i.e., positive semidefinite operators with trace equal to one, is defined as~\cite{umegaki1962ConditionalExpectationOperator}
\begin{equation}
\label{eq:def_relative_entropy}
    D(\rho \Vert \sigma) \coloneqq  \begin{cases}
        \tr[\rho (\ln\rho -  \ln\sigma)] & \ker[\sigma] \subseteq \ker[\rho] \\
        + \infty & \text{else}.
    \end{cases}
\end{equation}
More generally, we use this same formula to define quantum relative entropy when $\rho $ and $\sigma$ are positive semidefinite (i.e., not necessarily having a trace constraint).

Going beyond the setting of quantum hypothesis testing, quantum relative entropy has found applications in a variety of scenarios, most prominently in quantifying entanglement in a bipartite state~\cite{VPRK97,VP98,Rai99,Rai01} and more generally in quantifying the resource-theoretic value of quantum states~\cite{Chitambar_2019}. For these problems, it is necessary to optimize the relative entropy over a class of states, and consequently this has generated broad interest in the field of relative entropy optimization. After a number of important  contributions in this domain~\cite{Fawzi2017,Fawzi2018,Fawzi2019,fawzi2023optimalselfconcordantbarriersquantum,faust2023rationalapproximationsoperatormonotone}, the most recent work of~\cite{koßmann2024optimisingrelativeentropysemi} provided a compelling method for optimizing the relative entropy by making use of a recent integral representation for it~\cite{Frenkel_2023,jenčová2023recoverabilityquantumchannelshypothesis}. Before proceeding, let us indicate  that~\cite{koßmann2024optimisingrelativeentropysemi} is the paper most closely related to our results presented here and upon which we build to develop them.

Our work here goes beyond optimizing the relative entropy of states, with our main contribution being to provide methods for optimizing the relative entropy of channels. Before stating our findings more precisely, let us first motivate why this is important in the context of quantum information theory.
In this realm, quantifying  resources is a central theme. When discussing resources, we refer to entities or operations that hold some kind of value necessary for carrying out specific tasks. The field of resource theories in quantum information has developed to address this concern by providing structured ways to categorize and measure resources, especially when dealing with quantum systems. Quantum resource theories serve to analyze which operations or states are valuable, how they can be manipulated, and how they are constrained under different physical principles~\cite{Chitambar_2019}.
Our results here find applications in  resource theories of quantum channels~\cite{CFS16,TEZP19,liu2019resourcetheoriesquantumchannels,LY20}, due to the prominent role that the relative entropy of channels plays in such resource theories. Channels, in this context, are operations that manipulate quantum states in the most general physically allowed manner and thus can transform the resource-theoretic value of a state. One of the key tasks in resource theories of quantum channels is to quantify the value of these resources~\cite{Cooney_2016,CFS16,TEZP19,liu2019resourcetheoriesquantumchannels,LY20,Gour_2021}.

A powerful and central tool for quantifying the resource-theoretic value of channels is the relative entropy of channels~\cite{Cooney_2016,Leditzky_2018}, which serves as a measure of distinguishability of two quantum channels. 
For  quantum channels $\mathcal{N}_{A\to B}$ and $\mathcal{M}_{A\to B}$, it is defined as~\cite[Eq.~(1.14)]{Cooney_2016}
\begin{align}\label{eq:problem_statement_channel_relative}
    D(\mathcal{N} \Vert \mathcal{M}) \coloneqq  \sup_{\rho_{AR}} \ D(\mathcal{N}_{A\to B}(\rho_{AR})\Vert \mathcal{M}_{A\to B}(\rho_{AR})),
\end{align}
where the supremum is over every bipartite state $\rho_{AR}$ and also over the dimension of the reference system $R$. More generally, we use this same formula if $\mathcal{N}_{A\to B}$ and $\mathcal{M}_{A\to B}$ are completely positive maps (i.e., not necessarily having a trace-preservation constraint).
In the context of quantum channels, relative entropy quantifies how much one channel deviates from another, particularly when comparing a resourceful channel to a free channel; additionally, it finds a precise operational meaning in the context of quantum channel discrimination~\cite{Cooney_2016,Wilde_2020,WW19}. 
This makes it an essential quantity in tasks in many quantum information processes beyond channel discrimination, and in other tasks related to resource theories of channels, like error correction, quantum communication, and quantum cryptography~\cite{Metger_2022}, where understanding how much noise or disturbance a channel introduces is crucial. 

The relative entropy of channels, however, is difficult to compute directly due to the inherent complexity of quantum channels and the relative entropy of states itself. This paper addresses these challenges by developing methods to optimize the quantum relative entropy of channels, by using techniques from semidefinite programming~\cite{BV04,Skrzypczyk_2023} (see~\cite{Fang_2021,fawzi2022semidefiniteprogramminglowerbounds,fawzi2023optimalselfconcordantbarriersquantum,brown2023deviceindependentlowerboundsconditional, koßmann2024optimisingrelativeentropysemi,huang2024semidefiniteoptimizationmeasuredrelative} for various examples in which it has played a role in optimizing entropic functions of quantum states and channels).
A semidefinite program (SDP) is a form of optimization that is particularly suited to problems involving quantum systems, as it deals with constraints on matrices that are positive semidefinite — a common condition in quantum information theory due to the basic postulates of quantum mechanics. More concretely, 
in this work we address the optimization in~\eqref{eq:problem_statement_channel_relative} with the help of a recently developed integral representation from~\cite{Frenkel_2023} (see also~\cite{jenčová2023recoverabilityquantumchannelshypothesis,Hirche_2024}).  Then we develop, with the help of~\cite{koßmann2024optimisingrelativeentropysemi}, a method for maximizing a discretized version of the integral representation for the particular case of~\eqref{eq:problem_statement_channel_relative}. 

Our paper overcomes key technical obstacles to optimizing the relative entropy of channels. First, prior to the appearance of~\cite{koßmann2024optimisingrelativeentropysemi}, one of the main methods employed for optimizing the relative entropy of states was given by~\cite{Fawzi2017,Fawzi2018,Fawzi2019}. However, this approach employs a ``lifting'' technique, which doubles the Hilbert space of the states involved, whereby the relative entropy can be represented as
\begin{align}
D(\rho \Vert \sigma) & = \langle \Gamma | D_{\operatorname{op}}(\rho \otimes I \Vert I \otimes \sigma^T) |\Gamma \rangle, \\
D_{\operatorname{op}}(A\Vert B) & \coloneqq - A^{1/2} \ln (A^{-1/2} B A^{-1/2}) A^{1/2},
\end{align}
and $|\Gamma \rangle$ is defined later on in~\eqref{eq:Gamma-vec}.  While this approach is helpful for optimizing the relative entropy of states, it has been unclear for some time now how to employ it for optimizing the relative entropy of channels. As a second obstacle, observe that here we are considering a maximization task involving relative entropy, rather than a minimization task as considered in prior work, and the relative entropy of states is furthermore a nonlinear and convex function of the input states. While our results are presented in much greater detail later on, let us indicate here that we overcome the first obstacle by employing the integral representation of the relative entropy from~\cite{Frenkel_2023,Hirche_2024}, as done in~\cite{koßmann2024optimisingrelativeentropysemi}, which avoids the use of the lifting technique that doubles the Hilbert space. We overcome the second obstacle by appealing to the facts that 1) the relative entropy of channels is actually a concave function of the reduced density operator that represents the full channel input state (\cite[Prop.~13]{Wang2019magic} and \cite[Prop.~7.83]{khatriwilde2024}) and 2) the discretized linearization of the integral representation from~\cite{koßmann2024optimisingrelativeentropysemi} is simpler to handle analytically.

Next to the technical contributions in terms of executable programs, we apply our tools to general problems having the following structure:
\begin{align}\label{eq:optimization_free_states}
    \inf_{\mathcal{M} \in \mathcal{F}} D(\mathcal{N}\Vert \mathcal{M}),     
\end{align}
where $\mathcal{F}$ is a set of free channels (or free completely positive maps more generally). As already mentioned, given that the relative entropy of channels is a natural measure for quantifying the distinguishability of quantum channels, the quantity~\eqref{eq:optimization_free_states} can be seen as the deviation of a generally non-free channel~$\mathcal{N}$ from the set of free channels in a given resource theory. Related to the fact that the relative entropy of channels is a resource monotone, this optimization problem is related to the randomness required to erase the resources contained in a channel $\mathcal{N}$~\cite[Sections~VI and~VII]{liu2019resourcetheoriesquantumchannels}.

Let us briefly recall here some prior work on algorithms for computing other kinds of channel relative entropies besides the one in~\eqref{eq:problem_statement_channel_relative}. In~\cite{Fang_2021}, an algorithm was given for computing the geometric R\'enyi relative entropies of channels, which approximate the Belavkin--Staszewski relative entropy of channels, and in~\cite{huang2024semidefiniteoptimizationmeasuredrelative}, an algorithm was given for computing the measured relative entropy and measured R\'enyi relative entropies of channels. The relative entropy of channels considered here is arguably the most prominent channel relative entropy to consider, due to its various applications mentioned in Section~\ref{sec:application_resource_theory}. However, we should note that one can also argue for the significance of the measured relative entropies due to the limited capabilities of present-day quantum devices, and additionally for the Belavkin--Staszewski relative entropy of channels, due to its role in providing upper bounds on optimal rates for various tasks.

The rest of our paper is organized as follows. In \cref{sec:notation}, we establish preliminary concepts and notation that are used throughout. In \cref{sec:main_results}, we present our main result and derivations for optimizing the relative entropy of channels; see \cref{sec:lower-bound} for the lower bound and \cref{sec:upper-bound} for the upper bound. In \cref{subsec:recap_resource_theory_channels}, we provide a brief introduction to resource theories of quantum channels and indicate how our findings are relevant in these contexts. In \cref{subsec:optimization_task}, we integrate techniques for optimizing the relative entropy of channels with a concrete quantum resource theory to derive bounds on the asymptotic rates of a channel-simulation task under that resource theory. In \cref{subsec:numerical_evidence}, we present an explicit numerical example and elaborate upon the numerical accuracy claimed in \cref{sec:main_results}. In \cref{subsec:k-extendible}, we outline how to apply our method to the special class of $k$-extendible channels and, together with a de Finetti argument, obtain bounds on the relative entropy of channels when the free channels in the resource theory are entanglement-breaking channels. 
Finally, in \cref{sec:conclusion}, we conclude with a summary and indicate some directions for future research.

\section{Preliminaries and notation}

\label{sec:notation}

In this paper we consider finite-dimensional Hilbert spaces denoted by $\mathcal{H}$. Systems are usually described with subscripts, as done with the notation $\mathcal{H}_A$. The set of states acting on a Hilbert space $\mathcal{H}$ is denoted by $\mathcal{S}(\mathcal{H})$ and abbreviates the set of positive semidefinite operators satisfying $\tr[\cdot] = 1$ and defined with respect to the usual cone $\mathcal{P}(\mathcal{H})$ of positive semidefinite operators acting on $\mathcal{H}$. 

Quantum channels are completely positive and trace preserving maps, denoted by 
\begin{align}
    \mathcal{N}_{A\to B}\colon \mathcal{S}(\mathcal{H}_A) \to \mathcal{S}(\mathcal{H}_B).
\end{align}
To each completely positive map $\mathcal{M}_{A\to B}$, we assign the essentially unique Choi matrix: 
\begin{align}\label{eq:choi_isomorphism}
    \Gamma_{AB}^\mathcal{M} \coloneqq (\operatorname{id}_A \otimes \mathcal{M}_{A\to B}) (\Gamma_{AA}) \in \mathcal{P}(\mathcal{H}_A \otimes \mathcal{H}_B).
\end{align}
The uniqueness is due to the  Choi--Jamiołkowski isomorphism.
In~\eqref{eq:choi_isomorphism}, the operator $\Gamma_{AA} \in \mathcal{S}(\mathcal{H}_A \otimes \mathcal{H}_A)$ is the maximally entangled operator, defined as
\begin{align}
    \Gamma_{AA} & \coloneqq | \Gamma \rangle \!\langle \Gamma |_{AA},\\
    | \Gamma \rangle & \coloneqq \sum_{i} \ket{ii}_{AA}.
    \label{eq:Gamma-vec}
\end{align}

For two states $\rho,\sigma \in \mathcal{S}(\mathcal{H})$, the quantum relative entropy is defined as in~\eqref{eq:def_relative_entropy}. Importantly, in all our calculations we use the natural logarithm to be consistent with the integral representations presented later on.
For a pair of positive semidefinite operators $\rho,\sigma \in \mathcal{P}(\mathcal{H})$, we define the variable
\begin{equation}
\ln\lambda \coloneqq  D_{\max}(\rho \Vert \sigma)    
\end{equation}
to be the max-relative entropy~\cite{datta2009MinMaxrelativeEntropies}, where
\begin{equation}
    D_{\max}(\rho \Vert \sigma) \coloneqq \ln \inf_{\gamma \geq 0} \{ \gamma : \rho \leq \gamma \sigma\}.\label{eq:dmax-def}
\end{equation}
The inequality in~\eqref{eq:dmax-def} should be understood in terms of the usual operator order; i.e., $A \geq B$ for Hermitian $A$ and $B$ if and only if $A - B \in \mathcal{P}(\mathcal{H})$. 
Furthermore, for a Hermitian operator $H \in \mathcal{B}(\mathcal{H})$ with unique Jordan--Hahn decomposition $H = H_+ - H_-$ (into orthogonal positive and negative parts), we define
\begin{equation}
\tr^+[H]\coloneqq  \tr[H_+].
\end{equation}
In our paper, we make extensive use of the following integral representation for the relative entropy~\cite{jenčová2023recoverabilityquantumchannelshypothesis}:
\begin{align}\label{eq:jencova_representation}
    D(\rho \Vert \sigma) = \tr[\rho - \sigma] + \int_\mu^\lambda \frac{ds}{s} \tr^+ [\sigma s - \rho] + \ln\lambda +1 - \lambda,
\end{align}
which is based on an integral representation established in~\cite{Frenkel_2023} and for which an alternate proof has recently been given in~\cite{liu2025layercakerepresentationsquantum}.
Observe that the first term $\tr[\rho - \sigma] = 0 $ if $\rho$ and $\sigma$ are both states.
The constants $\lambda,\mu \in \mathbb{R}_{\geq 0}$ are defined in terms of the following operator inequalities:
\begin{align}
    \mu \sigma \leq \rho \leq \lambda \sigma.
    \label{eq:lam-mu-rho-sig}
\end{align}
Note that the smallest possible value of $\lambda $ satisfies $\ln \lambda = D_{\max}(\rho \Vert \sigma)$ and the largest possible value of $\mu$ satisfies $-\ln \mu = D_{\max}(\sigma \Vert \rho)$. From this we can immediately deduce that $D(\rho \Vert \sigma) = +\infty$  if and only if $D_{\max}(\rho \Vert \sigma) = +\infty$.

\section{Main result: Optimizing the quantum relative entropy of channels}

\label{sec:main_results}

The motivation for our paper stems from the study of the following optimization task involving quantum channels.  Let $\mathcal{N}_{A \to B}$ and $\mathcal{M}_{A\to B}$ be two quantum channels, where each channel maps states from a finite-dimensional Hilbert space $\mathcal{H}_A$ (the input system) to those of another finite-dimensional Hilbert space $\mathcal{H}_B$ (the output system). The quantum relative entropy of channels is a measure of distinguishability between these two channels and is formally defined as~\cite{Cooney_2016,Leditzky_2018,Gour_2021}: 
\begin{equation}
    D(\mathcal{N} \Vert \mathcal{M}) \coloneqq  \sup_{\rho_{AR}}  D(\mathcal{N}_{A\to B}(\rho_{AR})\Vert \mathcal{M}_{A\to B}(\rho_{AR})),
\end{equation}
where the optimization is performed over every quantum state $\rho_{AR}$, which is a density matrix on the joint system composed of $\mathcal{H}_A$ and an arbitrary reference system $\mathcal{H}_R$. Here, $R$ represents an ancillary reference system that can have an arbitrarily large dimension, and this dimension is thus also involved in the optimization. The quantity $D(\mathcal{N}_{A\to B}(\rho_{AR})\Vert \mathcal{M}_{A\to B}(\rho_{AR}))$ is the quantum relative entropy between the output states of the channels $\mathcal{N}_{A\to B}$ and $\mathcal{M}_{A\to B}$ after they act on the input state $\rho_{AR}$. As mentioned previously, more generally we use the same formula above when $\mathcal{N}_{A\to B}$ and $\mathcal{M}_{A\to B}$ are completely positive maps.

The operational significance of the relative entropy of channels, particularly in the context of quantum resource theories, is reviewed in depth in \cref{sec:application_resource_theory}. From a mathematical perspective, one can simplify the optimization over arbitrarily large reference systems. By considering a purification of the marginal state $\rho_A$ (the reduced state on system $A$), it can be shown that the optimal value when  optimizing over every possible state $\rho_{AR}$ is equal to the optimal value when optimizing over pure states on the enlarged Hilbert space $\mathcal{H}_A \otimes \mathcal{H}_{R}$, where $\mathcal{H}_{R}$ is isomorphic to $\mathcal{H}_A$. This reduction in the optimization problem was proven in~\cite[Lemma~7]{Cooney_2016} (see also~\cite[Definition~II.2]{Leditzky_2018} and the text thereafter) and is recalled  in \cref{lem:optimization_restriction}.

Additionally, applying~\cite[Lemma~7]{Cooney_2016} or~\cite[Prop.~7.82]{khatriwilde2024} (recalled here as \cref{lem:conic_formulation_relative_channels}), we can further reformulate the problem using the Choi representations of the channels $\mathcal{N}_{A\to B}$ and $\mathcal{M}_{A \to B}$ from~\eqref{eq:choi_isomorphism}. The Choi matrix of a quantum channel captures all the information about the channel and provides a convenient representation for optimization. In particular, the quantum relative entropy of channels can be recast as an optimization problem over the complete cone of quantum states on the isomorphic system~$\mathcal{H}_{A}$:
\begin{equation}\label{eq:final_optimization_problem}
\begin{aligned}
         &D(\mathcal{N}\Vert \mathcal{M}) = &\sup \ &D\!\left(\rho_{A}^{1/2} \Gamma^\mathcal{N}_{A B}\rho_{A}^{1/2} \middle \Vert \rho_{A}^{1/2}\Gamma^\mathcal{M}_{A B}\rho_{A}^{1/2}\right)&  \\
         & &\operatorname{s.t.} \ &\rho_{A} \in \mathcal{S}(\mathcal{H}_{A}),&
    \end{aligned}
\end{equation}
where $\Gamma^{\mathcal{N}}_{A B}$ and $\Gamma^{\mathcal{M}}_{A B}$ are the Choi matrices associated with the channels $\mathcal{N}_{A \to B}$ and $\mathcal{M}_{A \to B}$, respectively. The optimization is carried out over the set of valid quantum states  on $\mathcal{H}_{A}$, each of which is denoted by $\rho_{A}$. It is known that this optimization problem is concave in $\rho_{A}$ (see, e.g., \cite[Prop.~13]{Wang2019magic} and \cite[Prop.~7.83]{khatriwilde2024}).

In the subsequent sections, we employ techniques from the fields of quantum information theory and mathematical optimization to derive upper and lower bounds for this optimization problem. In particular, we draw on methods from the recent works~\cite{jenčová2023recoverabilityquantumchannelshypothesis,Frenkel_2023,koßmann2024optimisingrelativeentropysemi} to construct these bounds. Moreover, for added generality, we impose an energy constraint on the reduced state of the system $\mathcal{H}_{A}$, which is a common consideration in physically relevant quantum systems. This constraint further refines the optimization problem and ensures that the resulting bounds are physically meaningful in realistic scenarios where energy constraints are imposed.

\subsection{Preliminary comments on optimizing channel relative entropy}

 The optimization problem~\eqref{eq:final_optimization_problem} can now be converted with the help of the integral representation~\eqref{eq:jencova_representation} into a form to which the  methods of~\cite{koßmann2024optimisingrelativeentropysemi} are applicable. However, here we are considering a maximization problem and not a minimization problem, the latter being done previously in~\cite{koßmann2024optimisingrelativeentropysemi}. As it turns out, in the particular case of the relative entropy of channels, maximizing does not make a large difference in the structure of the resulting optimization problem, because the relative entropy of channels is concave (\cite[Prop.~13]{Wang2019magic} and \cite[Prop.~7.83]{khatriwilde2024}). 
 
 The main idea of the approach is to discretize the integral from~\eqref{eq:jencova_representation} and to linearize the domain piecewise.  
 For the domain of the integral, given as the (compact) interval $[\mu,\lambda]$ in~\eqref{eq:jencova_representation}, we need a bound on $\lambda$, which is related to the max-relative entropy in~\eqref{eq:final_optimization_problem}, as previously mentioned after~\eqref{eq:lam-mu-rho-sig}. Importantly, the relative entropy is finite if and only if $\lambda$ is finite and thus if and only if $[\mu,\lambda]$ is compact (in the finite-dimensional case). To conclude, we have to calculate the max-relative entropy of the underlying problem class beforehand.
 
 As we will see now (and as observed previously in~\cite[Remark~19]{Wilde_2020}), the problem in~\eqref{eq:problem_statement_channel_relative} has a finite value if and only if the following optimization  
\begin{equation}
    \label{eq:max_relative_entropy}
\begin{aligned}
   &\ln\lambda \coloneqq  &\sup \ &D_{\max}\!\left(\rho_{A}^{1/2}\Gamma^\mathcal{N}_{A B}\rho_{A}^{1/2} \middle \Vert \rho_{A}^{1/2}\Gamma^\mathcal{M}_{A B}\rho_{A}^{1/2}\right) & \\ 
        & &\operatorname{s.t.} \ &\rho_{A} \in \mathcal{S}(\mathcal{H}_{A})&
\end{aligned}
\end{equation}
leads to a finite value. However, as observed in~\cite[Lemma~12]{Wilde_2020}, the optimization problem in~\eqref{eq:max_relative_entropy} can be reduced to $\ln\lambda = D_{\max}(\Gamma^\mathcal{N}_{A B} \Vert \Gamma^\mathcal{M}_{A B})$ due to the fact that the max-relative entropy for two states $\rho,\sigma \in \mathcal{S}(\mathcal{H}_A)$ can be written as
\begin{equation}\label{eq:max_relative_entropy_states}
\begin{aligned}
   e^{D_{\operatorname{max}}(\rho \Vert \sigma)} = \inf& \ \lambda& \\
    \operatorname{s.t.}& \ \rho \leq \lambda \sigma&,
\end{aligned}
\end{equation}
and the completely positive map $(\cdot) \to \rho_{A}^{1/2} (\cdot) \rho_{A}^{1/2}$ is order-preserving when $\rho_{A}$ is invertible, i.e., 
\begin{align}\label{eq:max_relativen_entropy_choi_matrices}
    \rho_{A}^{1/2}\Gamma^\mathcal{N}_{A B}\rho_{A}^{1/2} \leq \lambda \rho_{A}^{1/2}\Gamma^\mathcal{M}_{A B}\rho_{A}^{1/2} \quad \Leftrightarrow \quad \Gamma^\mathcal{N}_{A B}\leq \lambda \Gamma^\mathcal{M}_{A B}.
\end{align}
In conclusion, solving the SDP~\eqref{eq:max_relative_entropy_states} for the Choi matrices yields an adequate $\lambda$ for~\eqref{eq:jencova_representation}. This is a natural observation, because it states that the finiteness of the relative entropy of channels inherently depends on the channels themselves. Moreover, having two channels at hand and first estimating $\lambda$ allows us to determine whether  the value of the relative entropy of channels is finite. From the point of view of numerical analysis, it is desirable  to exclude this situation even before starting the algorithm.

Assuming now without loss of generality for the rest of the paper that $\lambda <+\infty$, we formulate the resulting optimization for the relative entropy of channels, together with a possible family of energy constraints $h_i(\cdot)$ on the cone $\mathcal{P}(\mathcal{H}_{A})$, as follows:
\begin{equation}
\label{eq:general_problem}
\begin{aligned}
c^\star \; \coloneqq \; \sup_{\substack{
\rho_A \ge 0,\\
\tr[\rho_A] = 1,\\
h_i(\rho_A) \ge 0
}} \;\;
& D\!\left(\rho_A^{1/2} \Gamma^\mathcal{N}_{AB} \rho_A^{1/2} \middle \Vert \rho_A^{1/2} \Gamma^\mathcal{M}_{AB} \rho_A^{1/2}\right) &
\end{aligned}
\end{equation}
under the assumption 
\begin{align}
    D_{\operatorname{max}}(\Gamma^\mathcal{N}_{A B} \Vert \Gamma^\mathcal{M}_{A B} ) = \ln  \lambda < \infty.
\end{align} 
In more detail, an energy constraint $h_i(\rho_{A})$ is defined as follows:
\begin{equation}
    h_i(\rho_{A}) \coloneqq E_i - \operatorname{tr}[H_i \rho_{A}],
\end{equation}
where $E_i \in \mathbb{R}$ and $H_i$ is a Hermitian operator playing the role of a Hamiltonian. 

\subsection{Lower bound on channel relative entropy}

\label{sec:lower-bound}

The first main result of our paper constructs a sequence of SDPs, which converges to the relative entropy of channels from below. The underlying idea is strongly related  to the developments in~\cite{koßmann2024optimisingrelativeentropysemi}, and we make use of the optimal step size for integration from~\cite[Corollary 1]{koßmann2024optimisingrelativeentropysemi}.

\begin{theorem}[Lower bound approximation for the relative entropy of channels]\label{thm:lower_bounds}
    Let $\mathcal{N}_{A\to B}$  and $\mathcal{M}_{A\to B}$ be completely positive maps with Choi matrices $\Gamma_{AB}^{\mathcal{N}}$ and $\Gamma^{\mathcal{M}}_{AB}$, respectively. Define
    \begin{equation}
    \ln\lambda \coloneqq  D_{\max}(\Gamma_{AB}^{\mathcal{N}}\Vert \Gamma_{AB}^{\mathcal{M}}).
    \end{equation}
    To find an $\varepsilon$-approximation of the relative entropy of channels from below under some semidefinite constraints $\left(h_i\right)_i$ on the optimization variable $\rho_{A}$ in~\eqref{eq:final_optimization_problem}, one has to optimize $r  = O\!\left(\sqrt{\lambda/\varepsilon}\right)$ positive semidefinite matrices $Q_k \in \mathcal{P}(\mathcal{H}_A \otimes \mathcal{H}_B)$, $1\leq k \leq r$, within the following SDP:
    \begin{equation}
\begin{aligned}
\sup_{\substack{\rho_A \ge 0,\, \tr[\rho_A] = 1 \\ 0 \le Q_k \le \rho_A \otimes \mathds{1}_B}} 
\; & \Bigg\{\tr\!\Big[\rho_A(\Gamma_{AB}^{\mathcal{N}} - \Gamma_{AB}^{\mathcal{M}})\Big] \\
& \quad + \sum_{k=1}^r \tr\!\Big[ Q_k (\alpha_k \Gamma_{AB}^{\mathcal{N}} + \beta_k \Gamma_{AB}^{\mathcal{M}}) \Big] \Bigg\}\\
& + \ln \lambda + 1 - \lambda \\
\operatorname{s.t.} \quad & h_i(\rho_A) \ge 0, \quad i = 1, \dots, m
\end{aligned}
\end{equation}
    The coefficients $\alpha_k$ and $\beta_k$ for $1\leq k \leq r$ can be efficiently precalculated by employing~\cite[Eq.~(6)]{koßmann2024optimisingrelativeentropysemi}. 
\end{theorem}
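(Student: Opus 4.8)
The plan is to start from the integral representation \eqref{eq:jencova_representation} applied to the pair $\rho = \rho_A^{1/2}\Gamma^{\mathcal N}_{AB}\rho_A^{1/2}$, $\sigma = \rho_A^{1/2}\Gamma^{\mathcal M}_{AB}\rho_A^{1/2}$. The first term $\tr[\rho-\sigma]$ becomes $\tr[\rho_A(\Gamma^{\mathcal N}_{AB}-\Gamma^{\mathcal M}_{AB})]$ by cyclicity of the trace, and the $\ln\lambda + 1 - \lambda$ tail is handled by the observation (already established in the excerpt via \cite[Lemma~12]{Wilde_2020} and the order-preservation of $X\mapsto \rho_A^{1/2}X\rho_A^{1/2}$) that the relevant $\lambda$ can be taken as $D_{\max}(\Gamma^{\mathcal N}_{AB}\Vert\Gamma^{\mathcal M}_{AB})$ uniformly in $\rho_A$, so the integration interval $[\mu,\lambda]$ does not depend on the optimization variable. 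The remaining piece is the integral $\int_\mu^\lambda \frac{ds}{s}\,\tr^+[\sigma s - \rho]$, and here I would invoke the key variational identity underlying \cite{koßmann2024optimisingrelativeentropysemi}: for any Hermitian $H$, $\tr^+[H] = \max\{\tr[QH] : 0\le Q\le \mathds 1\}$, so that $\tr^+[s\sigma - \rho] = \max_{0\le Q\le\mathds 1}\tr[Q(s\,\rho_A^{1/2}\Gamma^{\mathcal M}\rho_A^{1/2} - \rho_A^{1/2}\Gamma^{\mathcal N}\rho_A^{1/2})]$. Substituting $Q \mapsto \rho_A^{-1/2}\tilde Q\rho_A^{-1/2}$ (valid when $\rho_A$ is invertible, then extended by continuity) converts the constraint $0\le Q\le \mathds 1$ into $0\le \tilde Q\le \rho_A\otimes\mathds 1_B$ and the integrand into $\tr[\tilde Q(s\,\Gamma^{\mathcal M}_{AB} - \Gamma^{\mathcal N}_{AB})]$, which is now jointly linear in $(\rho_A,\tilde Q)$ — this is exactly what removes the Hilbert-space-doubling obstacle mentioned in the introduction.

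Next I would discretize: partition $[\mu,\lambda]$ into $r$ subintervals with endpoints chosen according to the optimal (geometric-type) step size from \cite[Corollary~1]{koßmann2024optimisingrelativeentropysemi}, and on the $k$-th subinterval replace the continuous family of $Q$'s by a single matrix $Q_k$, approximating $\int \frac{ds}{s}\tr^+[s\sigma-\rho]$ from below by a Riemann-type sum $\sum_k \tr[Q_k(\beta_k\Gamma^{\mathcal M}_{AB} + \alpha_k\Gamma^{\mathcal N}_{AB})]$, where $\alpha_k,\beta_k$ are the integrals of $1/s$ and $-s/s$-type weights over the subinterval, precomputed exactly as in \cite[Eq.~(5)]{koßmann2024optimisingrelativeentropysemi}. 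Because each $\tr^+$ is a maximum, replacing it by the value at a single feasible $Q_k$ gives a lower bound; taking the supremum over all feasible $Q_k$ and over $\rho_A$ (subject to $h_i(\rho_A)\ge 0$, $\tr\rho_A=1$, $\rho_A\ge 0$) yields precisely the stated SDP, whose optimal value lower-bounds $c^\star$.

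Finally I would establish the error bound and the count $r = O(\sqrt{\lambda/\varepsilon})$. This is essentially inherited from \cite{koßmann2024optimisingrelativeentropysemi}: one bounds the discretization error of $\int_\mu^\lambda \frac{ds}{s}\,\tr^+[s\sigma-\rho]$ by a Lipschitz/convexity estimate on the integrand $s\mapsto\tr^+[s\sigma-\rho]$ (which is convex and piecewise-linear in $s$ with controlled slope on $[\mu,\lambda]$), getting an $O(\lambda/r^2)$ error per the optimal step-size analysis, hence $r = O(\sqrt{\lambda/\varepsilon})$ suffices; the error is uniform in $\rho_A$ because $\|\Gamma^{\mathcal M}_{AB}\|$, $\|\Gamma^{\mathcal N}_{AB}\|$ and $\lambda$ are fixed and $\rho_A$ lies in a compact set. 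I expect the main obstacle to be the two places where invertibility of $\rho_A$ is used — the substitution $Q\mapsto\rho_A^{-1/2}\tilde Q\rho_A^{-1/2}$ and the order-preservation argument for $\lambda$ — which must be handled by a continuity/closure argument showing the supremum over invertible $\rho_A$ equals the supremum over all states, together with checking that the error bound from \cite[Corollary~1]{koßmann2024optimisingrelativeentropysemi} genuinely transfers to the maximization setting uniformly over the feasible set rather than just pointwise.
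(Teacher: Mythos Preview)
Your proposal is correct and follows essentially the same route as the paper: apply the integral representation~\eqref{eq:jencova_representation} to $\rho_A^{1/2}\Gamma^{\mathcal N}_{AB}\rho_A^{1/2}$ and $\rho_A^{1/2}\Gamma^{\mathcal M}_{AB}\rho_A^{1/2}$, rewrite $\tr^+$ variationally, discretize and replace the pointwise maximizer by a single $Q_k$ per subinterval (this is exactly the interchange-of-integral-and-supremum lower bound from \cite[Proposition~1]{koßmann2024optimisingrelativeentropysemi}), then change variables $Q_k\leftrightarrow \rho_A^{1/2}P_k\rho_A^{1/2}$ to linearize, and finally invoke \cite[Corollary~1]{koßmann2024optimisingrelativeentropysemi} for the $O(\sqrt{\lambda/\varepsilon})$ count. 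The only cosmetic difference is that the paper performs the variable substitution \emph{after} discretizing (defining $Q_k\coloneqq\rho_A^{1/2}P_k\rho_A^{1/2}$ rather than your inverse substitution), and it handles the invertibility issue you flagged in one line by restricting to invertible $\rho_A$ via continuity of the objective---precisely the closure argument you anticipated.
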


\begin{proof}
For the proof we apply the integral representation~\eqref{eq:jencova_representation} to the relative entropy expression in \eqref{eq:general_problem}.

Consider that
\begin{multline}
    \tr[\rho_A^{1/2}\Gamma_{AB}^{\mathcal{N}} \rho_A^{1/2} - \rho_A^{1/2}\Gamma_{AB}^{\mathcal{M}}\rho_A^{1/2}] \\
    = \tr[\rho_A(\Gamma_{AB}^{\mathcal{N}} - \Gamma_{AB}^{\mathcal{M}})], 
\end{multline}
which simplifies the expression for the first term in \eqref{eq:jencova_representation}, when applied to the relative entropy expression in \eqref{eq:general_problem}.

Using the integral representation in~\eqref{eq:jencova_representation}, we can rewrite~\eqref{eq:general_problem} as follows:
\begin{equation}
\begin{aligned}
\sup_{\substack{\rho_A \ge 0,\\ \tr[\rho_A] = 1}} \; 
& \Bigg\{\int_\mu^\lambda \frac{ds}{s} 
  \operatorname{tr}^+\!\Big[ \rho_A^{1/2} \Gamma_{AB}^{\mathcal{M}} \rho_A^{1/2} s 
  - \rho_A^{1/2} \Gamma_{AB}^{\mathcal{N}} \rho_A^{1/2} \Big] \\
& \qquad + \tr\!\Big[\rho_A (\Gamma_{AB}^{\mathcal{N}} - \Gamma_{AB}^{\mathcal{M}})\Big] \Bigg\} + \ln \lambda + 1 - \lambda \\
\operatorname{s.t.} \quad 
& h_i(\rho_A) \ge 0, \quad i = 1, \dots, m
\end{aligned}
\label{eq:integral-rep-sup}
\end{equation}
Writing the trace over the positive part as the supremum
\begin{multline}
    \operatorname{tr}^+\!\left[ \rho_{A}^{1/2}\Gamma^\mathcal{M}_{A B}\rho_{A}^{1/2}s - \rho_{A}^{1/2} \Gamma^\mathcal{N}_{A B}\rho_{A}^{1/2}\right] \\
    = \sup_{ 0 \leq P \leq \mathds{1}} \operatorname{tr}[P( \rho_{A}^{1/2}\Gamma^\mathcal{M}_{A B}\rho_{A}^{1/2}s - \rho_{A}^{1/2} \Gamma^\mathcal{N}_{A B}\rho_{A}^{1/2})],
    \label{eq:sup-pos-part}
\end{multline}
and interchanging the integral in~\eqref{eq:integral-rep-sup} with the supremum in~\eqref{eq:sup-pos-part} piecewise on a discretization yields a lower bound on the integral~\cite[Proposition~1]{koßmann2024optimisingrelativeentropysemi}. We choose a discretization $t_0$, $t_1$,  \ldots,  $t_r$ such that
\begin{align}\label{eq:discretization}
    \mu = t_0 < t_1 < \cdots <t_r = \lambda.
\end{align}
Then the coefficients $\alpha_k$ and $\beta_k$ are the integral values of the real-valued functions over the discretization~\eqref{eq:discretization}, which remain after exchanging the integral and supremum. These coefficients can be easily precomputed and closed formulas for them are provided in~\cite[Eq.~(6)]{koßmann2024optimisingrelativeentropysemi}. Recalling the definition of $c^{\star}$ in~\eqref{eq:general_problem}, the optimization thus becomes 
\begin{equation}
\label{eq:first_lower_bound_relaxation}
\begin{aligned}
c^\star \geq \sup_{\substack{\rho_A \ge 0,\, \tr[\rho_A] = 1 \\ 0 \le P_k \le \mathds{1}}} \; 
& \Bigg\{\sum_{k=1}^r \Big[ 
    \alpha_k \operatorname{tr}\!\big(P_k \rho_A^{1/2} \Gamma_{AB}^{\mathcal{N}} \rho_A^{1/2} \big) \\
& \quad + \beta_k \operatorname{tr}\!\big(P_k \rho_A^{1/2} \Gamma_{AB}^{\mathcal{M}} \rho_A^{1/2} \big) 
  \Big] \\
& \quad+ \tr\!\Big[\rho_A (\Gamma_{AB}^{\mathcal{N}} - \Gamma_{AB}^{\mathcal{M}}) \Big]\Bigg\} \\
& + \ln \lambda + 1 - \lambda \\
\operatorname{s.t.} \quad 
& h_i(\rho_A) \ge 0, \quad i = 1, \dots, m
\end{aligned}
\end{equation}

In a next step, we define 
\begin{align}
    Q_k \coloneqq \rho_{A}^{1/2}P_k \rho_{A}^{1/2}.   
    \label{eq:Q-k-def}
\end{align}
Importantly, we observe that conjugation with an invertible positive operator is order-preserving (here we can restrict the optimization to invertible $\rho_A$ due to the objective function being continuous in $\rho_A$). This implies the relations $Q_k\geq 0$ if and only if $P_k\geq 0$ and $Q_k\leq \rho_{A}\otimes \mathds{1}_B$ if and only if $P_k \leq \mathds{1}$. Inserting the set of variables labeled by $Q_k$ with the new positivity conditions, we can rewrite the optimization problem as
\begin{equation}
\label{eq:second_lower_bound_relaxation}
\begin{aligned}
c^\star \geq \sup_{\substack{\rho_A \ge 0,\, \tr[\rho_A] = 1 \\ 0 \le Q_k \le \rho_A \otimes \mathds{1}_B}} \; 
& \Bigg\{\sum_{k=1}^r \Big[
    \alpha_k \operatorname{tr}\!\big(Q_k \Gamma_{AB}^{\mathcal{N}}\big)  + \beta_k \operatorname{tr}\!\big(Q_k \Gamma_{AB}^{\mathcal{M}}\big)
  \Big] \\
&  + \tr\!\Big[\rho_A (\Gamma_{AB}^{\mathcal{N}} - \Gamma_{AB}^{\mathcal{M}}) \Big]\Bigg\}  + \ln \lambda + 1 - \lambda \\
\operatorname{s.t.} \quad 
& h_i(\rho_A) \ge 0, \quad i = 1, \dots, m
\end{aligned}
\end{equation}
This is the desired SDP lower bound on the relative entropy of channels. 

To get the desired $\varepsilon$-dependence, we first observe from \cite[Lem.~3]{koßmann2024optimisingrelativeentropysemi} that the lower bounds from \cite[Prop.~1]{koßmann2024optimisingrelativeentropysemi} have at least the same error dependence as the upper bounds in \cite{koßmann2024optimisingrelativeentropysemi}. Thus, the same argument to restrict to the upper bounds in the convergence analysis as in \cite{koßmann2024optimisingrelativeentropysemi} are at our disposal. Second, \cite[Cor.~3]{koßmann2024optimisingrelativeentropysemi} shows that the discretization used in \cite[Cor.~1]{koßmann2024optimisingrelativeentropysemi} yields uniform bounds in the arguments of the relative entropy. Uniformity implies immediately that \eqref{eq:second_lower_bound_relaxation} is compatible with taking a supremum and that the desired $\varepsilon$-bound is independent of the concrete arguments of the relative entropy. Thus, the discretization error can be immediately chosen from \cite[Prop.~3]{koßmann2024optimisingrelativeentropysemi}, yielding that the choice (see \cite[Cor.~1]{koßmann2024optimisingrelativeentropysemi})
\begin{equation}\label{eq:def_special_grid_xlogx}
t_k=
\begin{cases}
\mu, & k=1,\\
t_{k-1}+\sqrt{8\varepsilon\,t_{k-1}}, & k\ge 2.
\end{cases}
\end{equation}
needs $O(\sqrt{\lambda/\varepsilon})$ grid points for an $\varepsilon$-approximation of the relative entropy of channels. 
\end{proof}

\cref{thm:lower_bounds} gives a sequence of SDPs converging from below to the relative entropy of channels. The important ingredient for the SDP formulation is the integral representation~\eqref{eq:jencova_representation} with which we are able to linearize the relative entropy piecewise on the domain. In comparison to~\cite{koßmann2024optimisingrelativeentropysemi}, we need a maximization formulation, which leads to differences in the derivation after exchanging the supremum and integral. 

In all similar calculations with the integrand in the integral representation~\eqref{eq:jencova_representation}, one faces the task of estimating 
\begin{align}
    \sup_{0\leq P\leq \mathds{1}} \tr[P H]
\end{align}
for both $P$ and $H$ variables, where $H$ has to satisfy some constraints for finiteness of the expression. Usually, there are two methods that one can use to linearize this term. On one hand, one can apply the dual optimization of it, which becomes a minimization but with an operator-valued inequality constraint, as done in~\cite{koßmann2024optimisingrelativeentropysemi}. On the other hand, one can replace the product $PH$ with a new variable and adjust the constraints with an order-preserving step. As shown in \cref{thm:lower_bounds} and~\cite{koßmann2024optimisingrelativeentropysemi}, both variants can lead to a minimization or maximization formulation for the relative entropy. This makes the discretization method of the integral~\eqref{eq:jencova_representation} very powerful and applicable for many optimization tasks, which include the relative entropy. 

\subsection{Upper bound on channel relative entropy}

\label{sec:upper-bound}

In the following we aim to construct a sequence of upper bounds on~\eqref{eq:general_problem}. In minimization tasks as in~\cite{koßmann2024optimisingrelativeentropysemi}, upper bounds were constructed by using  the observation that the function
\begin{align}
    s \mapsto \tr^+[\sigma s - \rho]
\end{align}
is convex in $s$. As such, the whole proof idea worked again just with an adjustment of the integration of the coefficients. For maximization tasks, however, it is not straightforward to go along this path, because one creates a minimax formulation. In this case, we need more technical methods, as presented in the following theorem.

\begin{theorem}[Upper bound approximation for the relative entropy of channels] \label{thm:upper_bound}
Let $\mathcal{N}_{A\to B}$ and $\mathcal{M}_{A\to B}$ be two completely positive maps with Choi matrices $\Gamma_{AB}^{\mathcal{N}}$ and $\Gamma^{\mathcal{M}}_{AB}$, respectively. Define
\begin{equation}
    \ln\lambda \coloneqq  D_{\max}(\Gamma_{AB}^{\mathcal{N}}\Vert \Gamma_{AB}^{\mathcal{M}}).
\end{equation}
Suppose there exists a state $\rho_A$ such that the energy constraint with respect to a Hermitian operator $H$ on $\rho_A$ is satisfied (i.e., there exists $\rho_A$ such that $\operatorname{tr}[H \rho_A] \leq E$).
To find an $\varepsilon$-approximation of the relative entropy of channels from above, one can solve the following SDP:

\begin{equation}
\label{eq:main-result-up-bnd}
\begin{aligned}
\inf_{\substack{x \in \mathbb{R},\, y \ge 0 \\ N_0 \in \operatorname{Herm} \\ N_k \ge 0,\, 1 \le k \le r}} \; 
& x + y E + \ln \lambda + 1 - \lambda \\
\operatorname{s.t.} \quad 
& N_k \ge \gamma_k \Gamma_{AB}^{\mathcal{N}} + \delta_k \Gamma_{AB}^{\mathcal{M}}, \quad 0 \le k \le r, \\
& x \mathds{1}_A + y H_A \ge \operatorname{tr}_B\!\left[ \sum_{k=0}^r N_k \right].
\end{aligned}
\end{equation}
with coefficients $\gamma_k$ and $\delta_k$ appropriately chosen as in~\cite[Supp.~Mat.~C]{koßmann2024optimisingrelativeentropysemi}, for $k\in \{1, \ldots, r\}$, $\gamma_0 = 1$, and $\delta_0 = -1$. For the $\varepsilon$-approximation we thus need $O(\sqrt{\lambda/\varepsilon})$ SDP variables $N_0, \ldots, N_r$. 
\end{theorem}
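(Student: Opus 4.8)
The plan is to follow the proof of \cref{thm:lower_bounds} up to the point where the supremum over test operators would be exchanged, and then---instead of exchanging---to take a Lagrangian dual of the resulting semidefinite program, so that \emph{every} feasible point of the dual certifies an upper bound. First I would insert the integral representation~\eqref{eq:jencova_representation} into~\eqref{eq:general_problem} with $\rho=\rho_A^{1/2}\Gamma_{AB}^{\mathcal{N}}\rho_A^{1/2}$ and $\sigma=\rho_A^{1/2}\Gamma_{AB}^{\mathcal{M}}\rho_A^{1/2}$, using $\tr[\rho-\sigma]=\tr[\rho_A(\Gamma_{AB}^{\mathcal{N}}-\Gamma_{AB}^{\mathcal{M}})]$ exactly as there.

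The first genuine step is to bound the integral \emph{from above}. Since $s\mapsto\tr^+[\sigma s-\rho]=\sup_{0\le P\le\mathds{1}}\big(s\tr[P\sigma]-\tr[P\rho]\big)$ is a pointwise supremum of affine functions of $s$, it is convex, so on each cell $[t_{k-1},t_k]$ of a discretization $\mu=t_0<\cdots<t_r=\lambda$ it lies below the chord through its endpoint values. Multiplying each chord by $1/s$, integrating over its cell, and summing gives
\[
\int_\mu^\lambda\frac{ds}{s}\,\tr^+[\sigma s-\rho]\ \le\ \sum_{k=1}^r w_k\,\tr^+[\sigma t_k-\rho],
\]
with nonnegative weights $w_k$ given by explicit integrals of rational functions of the nodes; these weights, together with the node placement producing the asserted count $r=O(\sqrt{\lambda/\varepsilon})$, are precisely what \cite[Corollary~1, Appendix~E]{koßmann2024optimisingrelativeentropysemi} supplies, and the $t_0=\mu$ endpoint drops out because $\mu\sigma\le\rho$ forces $\tr^+[\sigma\mu-\rho]=0$. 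I would then write $\tr^+[\sigma t_k-\rho]=\sup_{0\le P_k\le\mathds{1}}\tr[P_k(\sigma t_k-\rho)]$, substitute $Q_k\coloneqq\rho_A^{1/2}P_k\rho_A^{1/2}$ as in~\eqref{eq:Q-k-def} (restricting to invertible $\rho_A$ by continuity and using order preservation of conjugation to turn $0\le P_k\le\mathds{1}$ into $0\le Q_k\le\rho_A\otimes\mathds{1}_B$), and fold the first term into the sum by introducing an auxiliary variable $Q_0$ pinned by the equality $Q_0=\rho_A\otimes\mathds{1}_B$ with the convention $\gamma_0=1$, $\delta_0=-1$, so that $\gamma_0\Gamma_{AB}^{\mathcal{N}}+\delta_0\Gamma_{AB}^{\mathcal{M}}=\Gamma_{AB}^{\mathcal{N}}-\Gamma_{AB}^{\mathcal{M}}$. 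Keeping $\tr^+$ in its supremum form---rather than dualizing it locally, which in the present maximization would create a min--max, in contrast to the minimization of \cite{koßmann2024optimisingrelativeentropysemi}---exchanging the now-independent suprema into one joint supremum, and setting $\gamma_k\coloneqq-w_k$, $\delta_k\coloneqq w_k t_k$ for $k\ge1$, one reaches the linear SDP relaxation
\[
c^\star\ \le\ \sup_{\rho_A,Q_0,\dots,Q_r}\Big\{\textstyle\sum_{k=0}^r\tr[Q_k(\gamma_k\Gamma_{AB}^{\mathcal{N}}+\delta_k\Gamma_{AB}^{\mathcal{M}})]+\ln\lambda+1-\lambda\Big\},
\]
subject to $Q_0=\rho_A\otimes\mathds{1}_B$, $\ 0\le Q_k\le\rho_A\otimes\mathds{1}_B$ for $1\le k\le r$, $\ \tr[\rho_A]=1$, $\ \tr[H\rho_A]\le E$, and $\rho_A\ge0$.

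The bulk of the work---and the step I expect to be the main obstacle---is to take the Lagrangian dual of this SDP and recognize it as~\eqref{eq:main-result-up-bnd}. I would attach a Hermitian multiplier $N_0$ to $Q_0=\rho_A\otimes\mathds{1}_B$, positive semidefinite multipliers $N_k\ge0$ to $\rho_A\otimes\mathds{1}_B-Q_k\ge0$ and $L_k\ge0$ to $Q_k\ge0$ for $1\le k\le r$, a scalar $x\in\mathbb{R}$ to $\tr[\rho_A]=1$, $y\ge0$ to $E-\tr[H\rho_A]\ge0$, and $Z\ge0$ to $\rho_A\ge0$. Maximizing the Lagrangian over the now-unconstrained $Q_k$ forces $N_0=\Gamma_{AB}^{\mathcal{N}}-\Gamma_{AB}^{\mathcal{M}}$ and $N_k=\gamma_k\Gamma_{AB}^{\mathcal{N}}+\delta_k\Gamma_{AB}^{\mathcal{M}}+L_k$, i.e.\ $N_k\ge\gamma_k\Gamma_{AB}^{\mathcal{N}}+\delta_k\Gamma_{AB}^{\mathcal{M}}$ together with $N_k\ge0$ (and, since enlarging $N_0$ only tightens what follows, the equality on $N_0$ may be relaxed to an inequality without changing the optimal value); maximizing over the unconstrained $\rho_A$ forces $x\mathds{1}_A+yH_A=\tr_B[\sum_{k=0}^rN_k]+Z$, hence $x\mathds{1}_A+yH_A\ge\tr_B[\sum_{k=0}^rN_k]$; and the surviving constant is $x+yE+\ln\lambda+1-\lambda$. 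The error-prone part is collecting \emph{all} $\rho_A$-linear terms---from $\tr[N_0(\rho_A\otimes\mathds{1}_B)]=\tr[\rho_A\tr_B[N_0]]$, from $\sum_{k\ge1}\tr[N_k(\rho_A\otimes\mathds{1}_B)]$, from the trace and energy terms, and from $Z$---and verifying that they assemble into exactly that single operator inequality.

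Finally, weak duality gives $c^\star\le(\text{primal relaxation})\le(\text{dual SDP})=$ the value of~\eqref{eq:main-result-up-bnd}, so any feasible $(\{N_k\},x,y)$ yields a rigorous upper bound; the $\varepsilon$-closeness then follows by combining strong duality between the two SDPs---valid under a mild qualification, namely strict feasibility of the relaxed primal, e.g.\ $\rho_A>0$ with $\tr[H\rho_A]<E$ and $0<Q_k<\rho_A\otimes\mathds{1}_B$---with the chord-discretization error estimate of \cite[Corollary~1]{koßmann2024optimisingrelativeentropysemi}, which is at most $\varepsilon$ once $r=O(\sqrt{\lambda/\varepsilon})$, giving the stated number $N_0,\dots,N_r$ of SDP variables. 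The convexity/chord step and the $Q_k$-substitution should be routine (they mirror \cref{thm:lower_bounds}); the delicate points are the global dualization and the strong-duality argument under only the stated energy-feasibility hypothesis.
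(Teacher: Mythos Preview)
Your proposal is correct but follows a genuinely different route from the paper's proof.

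The paper uses the \emph{infimum} representation $\tr^+[A]=\inf\{\tr[\nu]:\nu\ge 0,\ \nu\ge A\}$ at each discretization node. After the substitution $\nu_k=\rho_A^{1/2}N_k\rho_A^{1/2}$ this turns the upper bound, for fixed $\rho_A$, into an infimum over $N_0,\dots,N_r$; taking the supremum over $\rho_A$ then yields a $\sup$--$\inf$ problem, and the paper applies Sion's minimax theorem to swap the two. The remaining inner supremum over $\rho_A$ is a small linear SDP, whose dual (proved separately as a lemma) introduces the scalars $x,y$ and the constraint $x\mathds{1}_A+yH_A\ge\tr_B[\sum_k N_k]$.

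You instead keep $\tr^+$ in its \emph{supremum} form, so after the $Q_k$-substitution the discretized upper bound is a single joint maximization over $(\rho_A,Q_0,\dots,Q_r)$, and you reach~\eqref{eq:main-result-up-bnd} by one global Lagrangian dualization of that SDP, with the $N_k$ appearing directly as multipliers for the $Q_k$-constraints. This bypasses Sion's theorem altogether and treats the $\rho_A$- and $Q_k$-parts in one stroke; the trade-off is the larger bookkeeping in collecting all $\rho_A$-linear terms, which you rightly flag as the delicate step. One small caveat: your Slater argument for strong duality invokes a state with $\tr[H\rho_A]<E$ strictly, while the theorem assumes only the non-strict inequality. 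The paper avoids this by verifying strict feasibility on the \emph{minimization} side instead (choosing $y>0$ and $x$ large), which works under the stated hypothesis; the same fix applies to your dual SDP.
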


\begin{proof}
We begin by recalling the result from~\cite[Lemma~1]{koßmann2024optimisingrelativeentropysemi}, which asserts that for quantum states $\rho, \sigma \in \mathcal{S}(\mathcal{H})$ (i.e., density matrices on a Hilbert space $\mathcal{H}$), the function defined by the trace
\begin{align}
    s \mapsto \tr^+[\sigma s - \rho] 
\end{align}
is convex and monotonically increasing. By inspecting the proof of~\cite[Lemma~1]{koßmann2024optimisingrelativeentropysemi}, it is clear that these same properties hold when $\rho$ and $\sigma$ are general positive semidefinite operators. These properties are crucial for constructing convex upper bounds in optimization problems involving quantum states (or positive semidefinite operators more generally). With this result in mind, consider the optimization problem expressed in~\eqref{eq:general_problem}. We aim to relax the problem by applying the convexity of the curve mentioned above to the function 
\begin{equation}
    s \mapsto \operatorname{tr}^+\!\left[\rho_{A}^{1/2}\Gamma^\mathcal{M}_{A B}\rho_{A}^{1/2}s - \rho_{A}^{1/2} \Gamma^\mathcal{N}_{A B}\rho_{A}^{1/2}\right].
\end{equation}
This relaxation allows us to use convex interpolation methods, which simplifies the optimization by approximating the function in question using convex combinations.

To apply the convex interpolation, we discretize the parameter space. Specifically, we introduce a set of points, $t_0$, $t_1$, \ldots ,  $t_r$, satisfying
\begin{align}\label{eq:discretization_second_proof}
    \mu = t_0 < t_1 < \cdots <t_r = \lambda,
\end{align}
which divide the range of the parameter $s$ into small intervals $[t_k, t_{k+1}]$. On each interval, we can evaluate the integral by using convex interpolation of the function over the points $t_k$ (see \cite[Eq.~(44)]{koßmann2024optimisingrelativeentropysemi} for a technical discussion on secants of convex functions). Following a similar approach as in~\cite[Supp.~Mat.~C]{koßmann2024optimisingrelativeentropysemi}, we apply convex interpolation over each of the intervals $[t_k, t_{k+1}]$ to obtain an upper bound for the optimization problem. The constants $\gamma_k$ and $\delta_k$ that appear in this interpolation are defined in~\cite[Supp.~Mat.~C]{koßmann2024optimisingrelativeentropysemi} and are related to the weights used in the convex interpolation. 
Thus, for a fixed input state~$\rho_A$, the upper bound on the original optimization problem can be written as follows, by plugging into~\eqref{eq:jencova_representation} and~\cite[Eqs.~(C2)--(C4)]{koßmann2024optimisingrelativeentropysemi}:
\begin{equation}
\label{eq:first-step-up-bnd}
\begin{aligned}
\inf_{\substack{ \nu_k \ge 0,\, 1 \le k \le r}} \; 
& \left\{ \tr\!\big[\rho_A (\Gamma_{AB}^{\mathcal{N}} - \Gamma_{AB}^{\mathcal{M}})\big] 
+ \sum_{k=1}^r \tr[\nu_k] \right\}\\
\operatorname{s.t.} \quad 
& \nu_k \ge \gamma_k \rho_A^{1/2} \Gamma_{AB}^{\mathcal{N}} \rho_A^{1/2} 
           + \delta_k \rho_A^{1/2} \Gamma_{AB}^{\mathcal{M}} \rho_A^{1/2}, \\
           &\hspace{5cm} 1 \le k \le r.
\end{aligned}
\end{equation}
where, for brevity, we have excluded the constant additive term $\ln \lambda +1-\lambda$ from the objective function for now. Applying similar reasoning as before between~\eqref{eq:Q-k-def} and~\eqref{eq:second_lower_bound_relaxation} (i.e., picking $N_k$ such that $\rho_A^{1/2}N_k\rho_A^{1/2} = \nu_k$ and observing the equivalence of various constraints), Eq.~\eqref{eq:first-step-up-bnd} can be rewritten as follows:
\begin{equation}
\begin{aligned}
\inf_{\substack{ N_k \ge 0,\\ 1 \le k \le r}} \; 
& \left\{\tr\!\big[\rho_A (\Gamma_{AB}^{\mathcal{N}} - \Gamma_{AB}^{\mathcal{M}})\big] 
+ \sum_{k=1}^r \tr\!\big[N_k (\rho_A \otimes \mathds{1}_B)\big]\right\} \\
 \operatorname{s.t.} \quad 
& N_k \ge \gamma_k \Gamma_{AB}^{\mathcal{N}} + \delta_k \Gamma_{AB}^{\mathcal{M}}, \quad 1 \le k \le r.
\end{aligned}
\end{equation}
This formulation captures the essential structure of the problem, where each $N_k$ is positive semidefinite and bounded from below by a linear combination of the Choi matrices $\Gamma_{A B}^{\mathcal{N}}$ and~$\Gamma_{A B}^{\mathcal{M}}$.

Next, we include the maximization over every quantum state $\rho_{A}$ subject to the energy constraint $\operatorname{tr}[H\rho_A]\leq E$.  This turns the problem into a minimax formulation:
\begin{equation}
\label{eq:first_minimax}
\begin{aligned}
\sup_{\substack{\rho_A \ge 0, \tr[\rho_A] = 1,\\ \tr[H\rho_A] \le E}} 
\;\inf_{\substack{N_k \ge 0 \\ 1 \le k \le r}} \;&
\Bigg\{\tr\!\big[\rho_A (\Gamma_{AB}^{\mathcal{N}} - \Gamma_{AB}^{\mathcal{M}})\big] \\
&+ \sum_{k=1}^r \tr\!\big[N_k (\rho_A \otimes \mathds{1}_B)\big]\Bigg\} \\
\operatorname{s.t.} \quad & N_k \ge \gamma_k \Gamma_{AB}^{\mathcal{N}} + \delta_k \Gamma_{AB}^{\mathcal{M}} \\ 
&\hspace{25mm} 1 \le k \le r.
\end{aligned}
\end{equation}

Now let us  observe that the term 
\begin{align}
    \tr[\rho_A(\Gamma_{AB}^{\mathcal{N}} - \Gamma_{AB}^{\mathcal{M}})]
\end{align}
can be rewritten as an optimization  with $\gamma_0 = 1$ and $\delta_0 = -1$, as follows:
\begin{equation}
\begin{aligned}
\inf_{N_0 \in \operatorname{Herm}} \;\; 
& \tr\!\big[N_0 (\rho_A \otimes \mathds{1}_B)\big] \\
\operatorname{s.t.} \quad 
& N_0 \ge \gamma_0 \Gamma_{AB}^{\mathcal{N}} + \delta_0 \Gamma_{AB}^{\mathcal{M}}.
\end{aligned}
\end{equation}
Thus, we can modify~\eqref{eq:first_minimax} as follows:
\begin{equation}
\label{eq:up-bnd-mid-step}
\begin{aligned}
\sup_{\substack{\rho_A \ge 0,\\ \tr[\rho_A] = 1,\\ \tr[H \rho_A] \le E}} 
\;\inf_{\substack{N_0 \in \operatorname{Herm} \\ N_k \ge 0,\, 1 \le k \le r}} \;&
\sum_{k=0}^r \tr\!\big[N_k (\rho_A \otimes \mathds{1}_B)\big] \\
\operatorname{s.t.} \quad & N_k \ge \gamma_k \Gamma_{AB}^{\mathcal{N}} + \delta_k \Gamma_{AB}^{\mathcal{M}},\\
&\hspace{18mm}\quad 0 \le k \le r.
\end{aligned}
\end{equation}
The fact that $\rho_{A}$ comes from the spectrahedron $\left\{\rho_A \in \mathcal{P}(\mathcal{H}_A) \ \vert \ \tr[\rho_A] = 1\right\}$ yields convexity and compactness of the domain for $\rho_A$. In addition, linearity of the objective function and convexity of the set of operators $N_0, \ldots, N_r$ satisfying the constraints enables us to apply Sion's minimax theorem~\cite{sion1958GeneralMinimaxTheorems}. Thus, Eq.~\eqref{eq:up-bnd-mid-step} is equal to
\begin{equation}\label{eq:one_step_before_infsup}
\begin{aligned}
\inf_{\substack{N_0 \in \operatorname{Herm} \\ N_k \ge 0,\, 1 \le k \le r}} 
\;\sup_{\substack{\rho_A \ge 0,\\ \tr[\rho_A] = 1,\\ \tr[H \rho_A] \le E}} \;&
\sum_{k=0}^r \tr\!\big[N_k (\rho_A \otimes \mathds{1}_B)\big] \\
\operatorname{s.t.} \quad & N_k \ge \gamma_k \Gamma_{AB}^{\mathcal{N}} + \delta_k \Gamma_{AB}^{\mathcal{M}} \\
&\hspace{18mm}0 \le k \le r.
\end{aligned}
\end{equation}

Now we introduce the operator 
\begin{align}
    J_{AB} \coloneqq \sum_{k = 0}^r N_k
\end{align}
and just consider the program involved in  the supremum
\begin{equation}\label{eq:sup_formulation_upper_bounds}
\begin{aligned}
\sup_{\substack{\rho_A \ge 0,\\ \tr[\rho_A] = 1,\\ \tr[H \rho_A] \le E}} \;&
\tr\!\big[J_{AB} (\rho_A \otimes \mathds{1}_B)\big].
\end{aligned}
\end{equation}

An explicit calculation of the dual program of~\eqref{eq:sup_formulation_upper_bounds} is given in \cref{lem:dual_program},
and in particular, we give an argument for strong duality under the assumption that there exists a state $\rho_A$ satisfying the energy constraint (i.e., there exists $\rho_A$ satisfying $\tr[H \rho_{A} ] \leq E$). Thus we can replace~\eqref{eq:sup_formulation_upper_bounds} with the following program, which has  the same value as~\eqref{eq:sup_formulation_upper_bounds}, due to strong duality:
\begin{equation}\label{eq:inf_after_lemma}
\begin{aligned}
\inf_{\substack{x \in \mathbb{R}, y \ge 0}} \;&\  x + y E \\
\operatorname{s.t.} \;& x \mathds{1}_A + y H_A \ge \operatorname{tr}_B[J_{AB}].
\end{aligned}
\end{equation}

Replacing now~\eqref{eq:sup_formulation_upper_bounds} in~\eqref{eq:one_step_before_infsup} with~\eqref{eq:inf_after_lemma} and also replacing $ J_{AB}$ with $\sum_{k=0}^r N_k $ yields the following program:
\begin{equation}
\begin{aligned}
\inf_{\substack{x \in \mathbb{R}, y \ge 0,\\ N_0 \in \operatorname{Herm},\\ N_k \ge 0,\, 1 \le k \le r}} \;& 
x + y E \\
\operatorname{s.t.} \;& N_k \ge \gamma_k \Gamma_{AB}^{\mathcal{N}} + \delta_k \Gamma_{AB}^{\mathcal{M}}, \quad 0 \le k \le r,\\
& x \mathds{1}_A + y H_A \ge \operatorname{tr}_B \!\left[ \sum_{k=0}^r N_k \right].
\end{aligned}
\end{equation}
After incorporating the additive constant $\ln \lambda + 1-\lambda$ back into the objective function, this concludes the proof of~\eqref{eq:main-result-up-bnd}.

To get the desired $\varepsilon$-dependence, we observe that the upper bounds in \cite[Prop. 3]{koßmann2024optimisingrelativeentropysemi} are uniform in the states. Thus, the $\varepsilon$-dependence does not depend on the arguments of the relative entropy, which yields that we can choose the grid from \eqref{eq:def_special_grid_xlogx} and conclude with the same uniformity argument as in \cite[Cor. 3]{koßmann2024optimisingrelativeentropysemi} the desired dependence $O(\sqrt{\lambda/\varepsilon})$ for the number of grid points. 
\end{proof}

\cref{thm:upper_bound} provides a dual characterization of the optimization problem by applying \cref{lem:dual_program}. This application results in a final expression that is a combination of linear programming and semidefinite programming variables.

The combination of multiple technical tools from~\cite{koßmann2024optimisingrelativeentropysemi} and~\cref{lem:dual_program} is key in obtaining a minimization formulation for the relative entropy  of completely positive maps. This is significant because, in general, maximizing the relative entropy is a highly non-trivial task. The primary difficulty arises from the inherent non-linearity, 
making it challenging to analyze and optimize directly.

\subsection{Discussion}

Both \cref{thm:lower_bounds} and \cref{thm:upper_bound} reveal that the relative entropy of quantum channels has a particular structure that simplifies the problem in certain respects. This is due to the local nature of the quantum state $\rho_A \in \mathcal{S}(\mathcal{H}_A)$, which plays a critical role in the problem's formulation. On the one hand, this allows us to use techniques that upper bound the SDP variables $Q_k$ with respect to $\rho_{A}$, simplifying the analysis. On the other hand, we can apply the dualization techniques from~\cref{lem:dual_program} to further refine the bounds and the characterization of the problem.

Overall, the combination of these techniques leads to a more tractable formulation for the relative entropy of channels, particularly in the presence of local quantum operations, making the optimization problem more approachable than it might initially appear.

\section{Applications in resource theories of channels}

\label{sec:application_resource_theory}

An important application of the theorems presented in \cref{sec:main_results} lies within  resource theories of quantum channels~\cite{CFS16,TEZP19,liu2019resourcetheoriesquantumchannels,LY20}. Resource theories have become an essential part of quantum information theory~\cite{Chitambar_2019}, providing a systematic way for quantifying and understanding quantum resources. The importance of resource theories stems from their flexibility in adapting to various operational tasks, allowing us to tailor the theory to a specific problem. 

The extension of resource theories from quantum states to  channels has significantly broadened the scope of these theories. While resource theories for states focus on static resources --- essentially the properties of quantum states --- resource theories for channels address dynamical resources, which are associated with quantum processes or transformations. A key example of this is that resource theories of channels encompass resource theories of states by treating preparation channels (those that prepare states) as a special case.

A central concept in resource theories is the notion of a resource monotone, which is a measure that quantifies the amount of resource present and that does not increase under free operations. In the context of quantum channels, one can construct resource monotones built from the relative entropy of channels. Importantly, it also frequently provides a bound on optimal rates of various operational tasks involving the use of channels. For instance, it can be used to determine how well certain channels can perform specific tasks, or to compare the effectiveness of different channels in resource consumption. The results in \cref{sec:main_results} provide a practical way to compute the relative entropy of channels, which is vital for applying resource theories to real-world problems. By being able to calculate these quantities, we gain the ability to establish concrete bounds on the resources involved in a given application. This has significant implications for optimizing quantum operations and understanding the limits of quantum processes.

\subsection{Background on resource theories of channels}

\label{subsec:recap_resource_theory_channels}

To set the stage for these applications, let us briefly review resource theories of channels, as introduced in~\cite{Gour_2021,Cooney_2016,liu2019resourcetheoriesquantumchannels}. This review focuses on the aspects necessary to understand how the relative entropy of channels plays a role in practical applications within this framework.

Consider two quantum systems associated with $\mathcal{H}_A$ and $\mathcal{H}_B$, and define the set 
\begin{align}
    \operatorname{CPTP}(A\to B) \coloneqq \{T\colon \mathcal{S}(\mathcal{H}_A) \to \mathcal{S}(\mathcal{H}_B) \ \vert \ T \ \text{channel}\}.
\end{align}
In resource theories we characterize a subset $\mathcal{F}(A\to B) \subseteq \operatorname{CPTP}(A\to B)$ as the set of free channels. Free means operationally that using channels in $\mathcal{F}(A\to B)$ comes at no cost. Furthermore we enforce the following rules for a set of free operations:
\begin{enumerate}
    \item $\mathcal{F}$ is closed under tensor products and composition.
    \item $\mathcal{F}(A\to B)$ is topologically closed.
    \item $\mathcal{F}(A\to A)$ contains the identity. 
\end{enumerate}
For particular cases, one may define some more requirements for the set of free objects, but it is already sufficient to define $\mathcal{F}$ with these three axioms for our general purpose of introducing the fundamental framework. 

For our applications, we consider resource monotones next. Usually, questions in resource theories ask for the cost of a certain task under the assumption that a certain set $\mathcal{F}(A\to B)$ can be assumed to be free. As an example, one can seek to determine the cost of simulating a quantum channel. To answer this question reasonably, we need to define a measure for channels not included in $\mathcal{F}(A\to B)$. Axiomatically, we follow~\cite{liu2019resourcetheoriesquantumchannels}, by asserting that a resource monotone $\Omega\colon  \operatorname{CPTP}(A\to B) \to \mathbb{R}$ satisfies the following axioms:
\begin{enumerate}
    \item Normalization: $\Omega(\mathcal{N}) = 0$ if $\mathcal{N} \in \mathcal{F}(A\to B)$ and $\Omega(\mathcal{N})\geq 0$ for all channels.
    \item Faithfulness: $\Omega(\mathcal{N}) = 0$ if and only if $\mathcal{N}\in \mathcal{F}(A\to B)$ and $\Omega(\mathcal{N}) >0$ otherwise.
    \item Monotonicity under composition with a free channel $\mathcal{M}\in \mathcal{F}(A\to B)$
    \begin{itemize}
        \item left composition: $\Omega(\mathcal{M}\circ \mathcal{N}) \leq \Omega(\mathcal{N})$
        \item right composition: $\Omega(\mathcal{N}\circ \mathcal{M}) \leq \Omega(\mathcal{N})$
        \item tensoring: $\Omega(\mathcal{M}\otimes \mathcal{N}) \leq \Omega(\mathcal{N})$
    \end{itemize}
    \item Convexity: $\sum_i p_i\Omega(\mathcal{N}_i) \geq \Omega\!\left(\sum_i p_i \mathcal{N}_i\right)$ for a probability distribution $\left(p_i\right)_i$ over channels $\left(\mathcal{N}_i\right)_i$. 
\end{enumerate}
The axioms for resource monotones have a deep operational meaning. The requirements are in a way such that measuring a resource can be identified with a ``cost'' of applying, e.g., the considered channel. 

As shown in~\cite{liu2019resourcetheoriesquantumchannels}, the global robustness of a channel $\mathcal{N}$ is defined as 
\begin{align}
    R(\mathcal{N}) \coloneqq \min \!\left\{s\geq 0 \ \middle \vert \ \frac{1}{s+1}\mathcal{N} + \frac{s}{1+s}\mathcal{N}^\prime   \in \mathcal{F}\right \}.
\end{align}
and is a resource monotone. Furthermore we can build the log-robustness out of it:
\begin{equation}
\begin{aligned}
    \operatorname{LR}(\mathcal{N}) &\coloneqq \ln  (1+R(\mathcal{N})) \\
    &= \min_{\mathcal{M} \in \mathcal{F}} D_{\operatorname{max}}(\mathcal{N}\Vert \mathcal{M}).
\end{aligned}
\end{equation}
Now it is easy to show that the log-robustness is a resource monotone~\cite[Prop.~8]{liu2019resourcetheoriesquantumchannels}, and in particular, the smoothed log-robustness can be considered, if one optimizes over the $\varepsilon$-ball in diamond norm. The difference with resource theories of states, and thus the reason why we have no asymptotic equipartition theorem~\cite{Tomamichel_2009}, is that the dual to the diamond norm is the trace distance and not the purified distance. Thus we expect in general just lower bounds. This was done in~\cite[Theorem 11]{liu2019resourcetheoriesquantumchannels}, where it was shown that, for all $\varepsilon\in(0,1)$,
\begin{align}
    \liminf_{n\to \infty} \frac{1}{n} D_{\operatorname{max}}^\varepsilon(\mathcal{N}^{\otimes n} \Vert \mathcal{M}^{\otimes n}) \geq D(\mathcal{N}\Vert \mathcal{M})
\end{align}
and the right-hand side is the relative entropy of channels. Therefore we conclude that estimates on the relative entropy of channels give estimates on the asymptotic rates for channel simulation tasks.  

\subsection{Optimization task}\label{subsec:optimization_task}

From now on we assume to have a resource theory of channels at hand and characterized, as described in \cref{subsec:recap_resource_theory_channels}, by a set of free channels, $\mathcal{F}$. Then, as the relative entropy of channels can be seen as a measure of distinguishability, we aim to solve the following optimization task for a fixed channel $\mathcal{N}_{A\to B}$:
\begin{equation}
    \begin{aligned}
        \inf \ &D(\mathcal{N}_{A\to B}\Vert \mathcal{M}_{A\to B})& \\
          \operatorname{s.t.} \  &\mathcal{M} \in \mathcal{F}.&
    \end{aligned}
    \label{eq:rel-ent-ch-res-th}
\end{equation}
With the observation in~\eqref{eq:final_optimization_problem}, we can rewrite the relative entropy of channels as an optimization over a quantum state $\rho_A \in \mathcal{S}(\mathcal{H}_A)$ and thus we may introduce an additional energy constraint on $\rho_A$.  
With these preliminaries established, we can just apply our developed methods in \cref{thm:upper_bound} to obtain the following optimization task, which provides an upper bound on \eqref{eq:rel-ent-ch-res-th}:
\begin{equation}
\label{eq:channel_optimization_resource_theories}
\begin{aligned}
&\inf_{\substack{
x \in \mathbb{R},\; y \ge 0,\\
N_0 \in \operatorname{Herm},\; N_k \ge 0\; (1\le k \le r),\\
\Gamma_{AB}^{\mathcal{M}} \ge 0,\; 
\operatorname{tr}_B[\Gamma_{AB}^{\mathcal{M}}] = \mathds{1}_A,\; 
\Gamma_{AB}^{\mathcal{M}} \in \mathcal{F}
}} \;
\left\{x + y E\right\} + \ln \lambda + 1 - \lambda \\
&\qquad \operatorname{s.t.} \; N_k \ge \gamma_k \Gamma_{AB}^{\mathcal{N}} + \delta_k \Gamma_{AB}^{\mathcal{M}}, \quad 0 \le k \le r &\\
& \qquad x \mathds{1}_A + y H_A \ge \operatorname{tr}_B \!\left[ \sum_{k=0}^r N_k \right] .
\end{aligned}
\end{equation}
The constraint $\Gamma_{AB}^{\mathcal{M}} \geq 0$ enforces $\mathcal{M}$ to be completely positive, the constraint $\operatorname{tr}_B[\Gamma_{AB}^{\mathcal{M}}] = \mathds{1}_A$ enforces $\mathcal{M}$ to be trace preserving, and the constraint $\Gamma_{AB}^{\mathcal{M}} \in \mathcal{F}$ is a slight abuse of notation, indicating that $\mathcal{M}$ should be a free channel.

As it turns out, Eq.~\eqref{eq:channel_optimization_resource_theories} naturally gives a formulation in terms of the set $\mathcal{F}$. Thus, the hardness of evaluating~\eqref{eq:channel_optimization_resource_theories} now reduces to the question of whether $\mathcal{F}$ is SDP representable. This is of course a general problem, where possibilities from very simple up to impossible can occur. Two possible choices for $\mathcal{F}$ are  positive partial transpose (PPT) channels~\cite{Rai99,Rai01} or $k$-extendible channels~\cite{PBHS11,KDWW19,KDWW21,Berta_2021}. These channels are specified by semidefinite constraints, and thus the task to estimate the quantum relative entropy of a given channel to either of these sets of channels reduces to an SDP. On the other hand, if we aim to optimize over all entanglement-breaking channels~\cite{Horodecki_2003} or separable channels~\cite{VPRK97,BNS98}, then we would need to optimize over a restricted set of separable states~\cite[Thm.~4.6.1]{wilde2017QuantumInformationTheory} on $\mathcal{H}_A \otimes \mathcal{H}_B$, which is not SDP representable~\cite{Fawzi2021} and an inherently hard task.

Another concrete example for an SDP with a certain set $\mathcal{F}$ in the spirit of~\eqref{eq:channel_optimization_resource_theories} is given by the energy-constrained entanglement-assisted capacity of a channel~\cite{Holevo03}. Here $\mathcal{F}$ is the set of replacer channels (i.e., channels of the form $\mathcal{R}(\omega) = \operatorname{tr}[\omega]\sigma$) and thus can be directly denoted by the state $\sigma_B \in \mathcal{S}(\mathcal{H}_B)$, which is the state prepared by such a replacer channel:
\begin{equation}
\begin{aligned}
\inf_{\substack{
x \in \mathbb{R},\; y \ge 0,\\
N_0 \in \operatorname{Herm},\; N_k \ge 0\; (1\le k \le r),\\
\sigma_B \ge 0,\; \operatorname{tr}[\sigma_B] = 1
}} \;& x + y E + \ln \lambda + 1 - \lambda &\\
\operatorname{s.t.} \;& N_k \ge \gamma_k \Gamma_{AB}^{\mathcal{N}} + \delta_k \mathds{1}_A \otimes \sigma_B, \\
&\hspace{3cm} 0 \le k \le r &\\
& x \mathds{1}_A + y H_A \ge \operatorname{tr}_B \!\left[ \sum_{k=0}^r N_k \right] . &
\end{aligned}
\end{equation}

Let us note here that prior work employed a Blahut--Arimoto-like algorithm to estimate the unconstrained entanglement-assisted capacity~\cite{RISB20,RISB21}, but our approach here allows for estimating it via an SDP. We leave a detailed comparison of these approaches for future work.

\section{Examples}\label{sec:examples}

\subsection{\texorpdfstring{Calculating relative entropy of channels for fixed channels $\mathcal{N}$ and $\mathcal{M}$}{Calculating relative entropy of channels for fixed channels N and M}}\label{subsec:numerical_evidence}

In this section, we provide a proof-of-principle example for the relative entropy of channels for qubit channels. Let us consider the dephasing channel $\mathcal{N}_{\operatorname{deph}}$ with fixed parameter $0.4$ and the depolarizing channel $\mathcal{M}_{\operatorname{dep}}$ with variable parameter $p\in[0,0.1]$.
We abbreviate the Pauli matrices by $\sigma_x,\sigma_y,\sigma_z$. The dephasing channel $\mathcal{N}_{\operatorname{deph}}$ with parameter $p_{\mathrm{deph}}=0.4$ is defined by
\begin{align}
    \mathcal{N}_{\operatorname{deph}}(\rho)
= p_{\mathrm{deph}}\,\rho + (1-p_{\mathrm{deph}})\,\sigma_z \rho \sigma_z
\end{align}
and the depolarizing channel as
\begin{align}
    \mathcal{M}_{\operatorname{dep}}(\rho)
=\Bigl(1-\tfrac{3p}{4}\Bigr)\rho
+ \tfrac{p}{4}\Bigl(\sigma_x \rho \sigma_x + \sigma_y \rho \sigma_y + \sigma_z \rho \sigma_z\Bigr).
\end{align}
For the precision we choose $\varepsilon = 10^{-2}$. 

\autoref{fig:plot_dephsing_vs_depolarizing} depicts the relative entropy of these two channels as a function of the depolarizing parameter $p$, while the dephasing parameter is fixed. As expected, the channels become less distinguishable as the depolarizing parameter increases. A github implementation of \autoref{fig:plot_dephsing_vs_depolarizing} is provided in \cite{numerics}. To provide numerical justification for the results of \autoref{thm:lower_bounds} and \autoref{thm:upper_bound}, we examine the qualitative dependence of the required number of grid points on fixed values of $\lambda$ and $\varepsilon$, as illustrated in \autoref{fig:data_evaluation}. In the implementation, we first solve the SDP in \eqref{eq:max_relative_entropy_states}, applied as shown in \eqref{eq:max_relativen_entropy_choi_matrices}, to compute $\lambda$. We assume a numerical accuracy of at least $\varepsilon = 10^{-2}$. Our main claim in \autoref{thm:lower_bounds} and \autoref{thm:upper_bound} is that the number of grid points $r \in \mathbb{N}$ required to achieve an $\varepsilon$-approximation for a given $\lambda \in \mathbb{R}_{\geq 0}$ satisfies
\begin{align}
  r = O\!\left(\sqrt{\tfrac{\lambda}{\varepsilon}}\right).
\end{align}
The plot confirms that this estimate is satisfied in the considered cases. For completeness, we include in \autoref{fig:error_dependencies} the gap error resulting from \autoref{thm:lower_bounds} and \autoref{thm:upper_bound}, which is consistently of order $\varepsilon = 10^{-2}$.

\begin{figure}
\begin{tikzpicture}
    \begin{axis}[scaled x ticks=false,
xticklabel=\pgfkeys{/pgf/number format/.cd,fixed,precision=2,zerofill}\pgfmathprintnumber{\tick},
        width=8cm, 
        height=8cm, 
        xlabel={depolarizing $p $}, 
        ylabel={$D(\mathcal{N}_{\operatorname{deph}} \| \mathcal{M}_{\operatorname{dep}})$}, 
        grid=major, 
        legend style={at={(0.95,0.95)},anchor=north east},
        title={relative entropy of Channels} 
    ]
    \addplot[
        color=PineGreen,
        mark=*,
        thick,
        mark options={fill=PineGreen, draw=PineGreen}
    ] table [col sep=space] {qubit_values.txt};
    \end{axis}
\end{tikzpicture}
    \caption{Relative entropy of channels $D(\mathcal{N}_{\operatorname{deph}} \Vert \mathcal{M}_{\operatorname{dep}})$ between the dephasing channel $\mathcal{N}_{\operatorname{deph}}$ (with fixed parameter $0.4$) and the depolarizing channel $\mathcal{M}_{\operatorname{dep}}$ as the depolarizing parameter $p$ varies in $[0,0.1]$. The solid line represents the mean value of upper and lower bound (from \autoref{thm:lower_bounds} and \autoref{thm:upper_bound}) calculated in \cite{numerics}. The precision of these bounds is better than $\varepsilon = 10^{-2}$.}
    
    \label{fig:plot_dephsing_vs_depolarizing}
\end{figure}

\begin{figure}
    \centering
    \begin{tikzpicture}
\begin{axis}[
  width=8cm, height=8cm,
  xmode=log, ymode=log,
  xmin=200, xmax=1200,
  xlabel={$\sqrt{\lambda/\varepsilon}$},
  ylabel={$r$ (grid points)},
  grid=major,
  legend pos= north west 
]
   \addplot+[
    only marks,
    mark=*,
    mark size=1.8pt,
    mark options={fill=PineGreen, draw=PineGreen}
  ] table[
    x=x, y=r, sort
  ] {\datatable};
  \addlegendentry{$(\sqrt{\lambda/\varepsilon},r)$}

   \addplot[domain=200:1200, thick, color=MidnightBlue] {0.845*x};
  \addlegendentry{$r=0.845\,x$}
\end{axis}
\end{tikzpicture}
    \caption{We validate the claimed complexity bound $O\big(\sqrt{\lambda/\varepsilon}\big)$ for the number of grid points. For this purpose we plot the calculated number of grid points for the experiment in \autoref{fig:plot_dephsing_vs_depolarizing} against the calculated $\exp(D_{\operatorname{max}}(\mathcal{N}_{\operatorname{deph}} \Vert \mathcal{M}_{\operatorname{dep}}))$ divided by the calculated gap value in each experiment and taking the square root. As we see, the quotients are all below a straight line with gradient $0.845$ for the two channels under consideration. The resulting SDP always has as many variables of size $d^2 \times d^2$ as there are grid points, where $d$ is the dimension of the system on which the channel acts.}
    \label{fig:data_evaluation}
\end{figure}

\begin{figure}
    \centering
\begin{tikzpicture}
\begin{axis}[scaled x ticks=false,
xticklabel=\pgfkeys{/pgf/number format/.cd,fixed,precision=2,zerofill}\pgfmathprintnumber{\tick},
  width=8cm, height=6cm,
  xlabel={depolarizing $p $}, 
  ylabel={$-\log_{10}(\varepsilon)$},
  grid=major
]
  \addplot+[only marks, mark=*, mark size=1.5pt,mark options={fill=PineGreen, draw=PineGreen}]
    table[x=x, y=y] {\datatableB};
\end{axis}
\end{tikzpicture}
    \caption{We plot the gap value for the channels considered in \autoref{fig:plot_dephsing_vs_depolarizing}. Comparing this with \autoref{fig:data_evaluation} yields that we at least match the error dependencies claimed in \autoref{thm:lower_bounds} and \autoref{thm:upper_bound}.}
    \label{fig:error_dependencies}
\end{figure}

\subsection{\texorpdfstring{Optimizing $\mathcal{M}$ over the set of $k$-extendible channels}{Optimizing M over the set of k-extendible channels}}\label{subsec:k-extendible}

A particularly interesting class of channels is the set of entanglement breaking channels. Entanglement breaking means that, on whatever input state $\rho_{AB}$ we apply the channel $\mathcal{M}_{A\to A^\prime}$, the output state $\mathcal{M}_{A\to B}(\rho_{AB^\prime}) \in \operatorname{SEP}(B:B^\prime)$ is always separable. As shown in \cite[Thm. 4]{Horodecki_2003}, a channel is entanglement breaking if and only if its Choi matrix $\Gamma_{BB^\prime}^\mathcal{M}$ is separable. As already discussed, the set of separable states is not SDP representable~\cite{Fawzi2021} and thus it is an inherently hard task to optimize over it. However, for solving this task, quantum de Finetti theorems have been developed.

We review through the necessary arguments for the following task. Given an arbitrary channel $\mathcal{N}_{A\to B}$ we aim to calculate
\begin{equation}\label{eq:opt_entanglement_breaking}
\begin{aligned}
    \inf \ &D(\mathcal{N}\| \mathcal{M})& \\
    \operatorname{s.th.} \ &\mathcal{M} \ \text{is entanglement breaking}.&
\end{aligned}
\end{equation}
With the equivalence of entanglement-breaking channels and separable Choi matrices, we conclude that the optimization problem \eqref{eq:opt_entanglement_breaking} is equivalent to 
\begin{equation}\label{eq:optimization_problem_kextendible}
\begin{aligned}
\inf_{\substack{
\Gamma_{AB}^{\mathcal{M}} \in \operatorname{SEP}(A:B),\\
\operatorname{tr}_B[\Gamma_{AB}^{\mathcal{M}}] = \mathds{1}_A
}} \;\; 
\sup_{\rho_A \in \mathcal{S}(\mathcal{H}_A)} \;&
D\!\left(\rho_A^{1/2} \Gamma_{AB}^{\mathcal{N}} \rho_A^{1/2} \middle\| \rho_A^{1/2} \Gamma_{AB}^{\mathcal{M}} \rho_A^{1/2}\right) .&
\end{aligned}
\end{equation}
A precise analysis of the problem above yields that we need a de Finetti theorem without side-information but with constraints. Recent work done in \cite{Kossmann2025} (which is adopted from \cite{Berta_2021}) provides an in-depth analysis of the different types of SDP hierarchies and provides in particular a proposition for our situation (Prop. $3.2$ therein). Similar to \cite{Kossmann2025} we define the following sets
\begin{align}
    \Sigma_{\operatorname{Ext}}^n(A:B) \coloneqq \left\{\rho_{AB} \in \operatorname{Ext}^n(A:B) \ \middle \vert \ \tr_B[\rho_{AB}] = \frac{\mathds{1}_A}{d_A} \right\}.
\end{align}
whereby $\operatorname{Ext}^n(A:B)$ defines the set of states which have an $n$-extension (see, e.g., \cite[Eqs.~(13)--(14)]{SW2025}). 
In particular we define
\begin{align}
    \Sigma_{\operatorname{SEP}}(A:B) \coloneqq \left\{\rho_{AB} \in \operatorname{SEP}(A:B) \ \middle \vert \ \tr_B[\rho_{AB}] = \frac{\mathds{1}_A}{d_A} \right\}.
\end{align}
Then we have the natural sequence of inclusions
\begin{align}
    \Sigma_{\operatorname{Ext}}^1(A:B) \supseteq \cdots \supseteq \Sigma_{\operatorname{Ext}}^n(A:B) \supseteq  \cdots \supseteq  \Sigma_{\operatorname{SEP}}(A:B).
\end{align}
Now, the quantum de Finetti theorem with constraints in \cite[Prop. $3.2$]{Kossmann2025} shows that the sequence of sets converges towards $\Sigma_{\operatorname{SEP}}(A:B)$. Furthermore, the following expression considered as a function of $\Gamma_{AB}^\mathcal{M}$
\begin{align}
    \Gamma_{AB}^\mathcal{M} \mapsto \sup_{\rho_A} D\!\left(\rho_A^{1/2}\Gamma_{AB}^\mathcal{N} \rho_A^{1/2} \middle\| \rho_A^{1/2}\Gamma_{AB}^\mathcal{M} \rho_A^{1/2} \right) 
\end{align}
is lower semicontinuous. Hence, standard arguments from convex analysis yield
\begin{multline}
    \lim_{n\to \infty} \inf_{\Gamma_{AB}^\mathcal{M} \in \Sigma_{\operatorname{Ext}}^n(A:B)} D\!\left(\rho_A^{1/2}\Gamma_{AB}^\mathcal{N} \rho_A^{1/2} \middle\| \rho_A^{1/2}\Gamma_{AB}^\mathcal{M} \rho_A^{1/2} \right)\\ = \inf_{\Gamma_{AB}^\mathcal{M} \in \Sigma_{\operatorname{SEP}}(A:B)} D\!\left(\rho_A^{1/2}\Gamma_{AB}^\mathcal{N} \rho_A^{1/2} \middle\| \rho_A^{1/2}\Gamma_{AB}^\mathcal{M} \rho_A^{1/2} \right) .
\end{multline}
Now we are in a position to combine those findings above with our result. Thus, in order to solve \eqref{eq:optimization_problem_kextendible}, we replace the set of separable Choi matrices  with the set $\Sigma_{\operatorname{Ext}}^n(A:B)$ and get a lower bound on the relative entropy of channels for the resource theory of non-entanglement-breaking channels
\begin{equation}\label{eq:optimization_problem_kextendible_replaced}
\begin{aligned}
    \inf \sup_{\rho_A \in\mathcal{S}(\mathcal{H}_A)}\ &D\!\left(\rho_A^{1/2}\Gamma^\mathcal{N}_{AB} \rho_A^{1/2}\middle\| \rho_A^{1/2}\Gamma_{AB}^\mathcal{M}\rho_A^{1/2}\right)& \\
    \operatorname{s.th.} \ &\Gamma_{AB}^\mathcal{M} \in \Sigma^n_{\operatorname{SEP}}(A:B).& 
\end{aligned}
\end{equation}
The set $\Sigma^n_{\operatorname{SEP}}(A:B)$ is SDP-representable and thus we can calculate with \autoref{thm:upper_bound} an $\varepsilon$-approximation for \eqref{eq:optimization_problem_kextendible_replaced} with the help of the following program
\begin{equation}\label{eq:channel_optimization_resource_theories_kext}
\begin{aligned}
\inf_{\substack{
x \in \mathbb{R}, \; y \ge 0,\\
N_0 \in \operatorname{Herm}, \; N_k \ge 0, \; 1 \le k \le r,\\
\Gamma_{AB}^{\mathcal{M}} \in \Sigma_{\operatorname{Ext}}^n(A:B)
}} \;\;
& x + y E + \ln \lambda + 1 - \lambda &\\
\operatorname{s.t.} \;& N_k \ge \gamma_k \Gamma_{AB}^{\mathcal{N}} + \delta_k \Gamma_{AB}^{\mathcal{M}}, \quad 0 \le k \le r &\\
& x \mathds{1}_A + y H_A \ge \operatorname{tr}_B \!\left[ \sum_{k=0}^r N_k \right] .&
\end{aligned}
\end{equation}

Here we focused on channels that are point-to-point (i.e., from a sender to a receiver), but note that the development above can be generalized to bipartite channels and the resource theory of unextendibility put forward in \cite{KDWW19,KDWW21}.

For numerics two comments may be of interest. First of all, we could include directly just a PPT constraint (this would give a different resource theory). Second, as discussed in \cite{Kossmann2025}, $n$-extendible states with constraints but without side-information can be represented within the symmetric subspace such that we can parameterize $\Sigma_{\operatorname{Ext}}^n(A:B)$ with significantly less SDP variables. However, to enforce the constraints we need one calculation over an exponentially large sum.

\section{Conclusion}

\label{sec:conclusion}

In this work, we provided a practical method for calculating the relative entropy of quantum channels. As established  earlier~\cite{liu2019resourcetheoriesquantumchannels}, the relative entropy of channels holds a fundamental position in resource theories of channels, analogous to the role of the standard (Umegaki) relative entropy in resource theories for quantum states. The approach we took builds on a recent advance~\cite{koßmann2024optimisingrelativeentropysemi}, which proposed optimizing the relative entropy of quantum states using an integral representation derived  in~\cite{jenčová2023recoverabilityquantumchannelshypothesis,Frenkel_2023}.

A key insight of our work is the realization that the discretized linearization of this integral representation is essential for handling the maximization tasks inherent in working with the convex and nonlinear nature of relative entropy. This approach allowed us to contribute technically by extending the minimization techniques from~\cite{koßmann2024optimisingrelativeentropysemi} to the maximization problems that arise in our study of quantum channels. It is important to note that, while minimization tasks involving relative entropy can often be efficiently solved using interior-point methods, such as those developed in~\cite{fawzi2022semidefiniteprogramminglowerbounds,fawzi2023optimalselfconcordantbarriersquantum}, the maximization tasks present unique challenges. Unlike minimization problems, which can be handled in a systematic or ``black-box'' manner, maximizing the relative entropy is more complex. Exceptions to this are semidefinite programs, where the objective function is linear, making it both convex and concave, and thus amenable to established optimization techniques. Even though the quantum relative entropy of channels can be seen as a concave optimization problem in the state $\rho_A \in \mathcal{S}(\mathcal{H}_A)$ (\cite[Prop.~13]{Wang2019magic} and \cite[Prop.~7.83]{khatriwilde2024}) in~\eqref{eq:final_optimization_problem}, finding a conic optimization representation including the relative entropy is a challenging task. 
As we have shown, in this situation an integral representation which solves the second challenge in a natural way can be very advantageous. 

Beyond the technical advancements in this paper, there is significant potential for future exploration in the area of maximizing relative entropy. For instance, the problem of calculating the measured relative entropy had been unresolved for a long time due to the lack of a convex or concave variational expression, as described in~\cite{Berta_2017}. However, recent progress using semidefinite programming techniques and beyond
has led to a solution \cite{huang2024semidefiniteoptimizationmeasuredrelative,huang2025acceleratedoptimizationmeasuredrelative}. 
It would be intriguing to investigate whether our approach could also be applied to measured relative entropy. Furthermore, uncertainty relations involving Shannon-type entropies can often be framed as maximization problems involving relative entropy, yet relatively few methods are available to address these challenges effectively for provable bounds. It would be interesting to explore whether our techniques could offer new solutions in this area. We also find it interesting to determine whether the relative entropy of channels could be computed by a Blahut--Arimoto-like algorithm, along the lines of~\cite{RISB20,RISB21}.

Lastly, let us note that the methods we have developed are not limited to the specific case of relative entropy. They are also applicable to more general $f$-divergences, as defined in~\cite{Hirche_2024}. There is potential to extend the framework in~\cite{koßmann2024optimisingrelativeentropysemi} to encompass general $f$-divergences, opening up new avenues for research.  We leave it as an open contribution for future research and hope to see it together with a well motivated operational application. We also refer to~\cite{Fang_2021} for many further  quantities that may be rewritten as an optimization problem involving the relative entropy of channels such that the tools developed here become applicable.
 
\medskip 
\textbf{Acknowledgments}. Part of this work was completed during the conference ``Beyond IID in Information Theory,'' held at the University of Illinois Urbana--Champaign from July~29 to August~2, 2024, and supported by NSF Grant No.~2409823. We are grateful to the conference organizers for providing the environment that led to our collaboration.
GK thanks Marco Tomamichel for hosting him at CQT Singapore in fall 2024, and he thanks Mario Berta for hints and advice around the topic.
GK acknowledges funding by the European Research Council (ERC Grant Agreement No. 948139).
MMW acknowledges support from the National Science Foundation under Grant No.~2329662.

\section*{Author Contributions}

The following describes the different contributions of the authors of this work, using roles defined by the CRediT
(Contributor Roles Taxonomy) project~\cite{NISO}:

\medskip 
\noindent \textbf{GK}:
Conceptualization, Formal Analysis,  Investigation, Methodology, Software, Validation, Writing – original draft,  Writing - Review \& Editing.

\medskip 
\noindent \textbf{MMW}: Conceptualization, Formal Analysis,  Investigation, Methodology, Validation,  Writing – original draft, Writing - Review \& Editing.

\bibliography{references}
\bibliographystyle{IEEEtran}

\appendix

\section{Technical lemmas}

\begin{lemma}[\cite{Cooney_2016}]
\label{lem:optimization_restriction}
    For the channel relative entropy~\eqref{eq:problem_statement_channel_relative}, we have
    \begin{align}
        D(\mathcal{M}\Vert \mathcal{N}) = \sup_{\psi_{A A}} D(\mathcal{M}_{A\to B}(\psi_{A A}) \Vert \mathcal{N}_{A\to B}(\psi_{A A}))
    \end{align}
    where the optimization is over every pure state $\psi_{A A} \in \mathcal{S}(\mathcal{H}_{A} \otimes \mathcal{H}_A)$  and $\mathcal{H}_{A} \cong \mathcal{H}_A$.
\end{lemma}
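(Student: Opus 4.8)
The plan is to establish the two inequalities separately. The bound $D(\mathcal{M}\Vert\mathcal{N}) \geq \sup_{\psi_{AA'}} D(\mathcal{M}_{A\to B}(\psi_{AA'})\Vert\mathcal{N}_{A\to B}(\psi_{AA'}))$, where $A'$ denotes a second copy of $A$, is immediate: every pure state on $\mathcal{H}_A\otimes\mathcal{H}_{A'}$ is in particular an admissible input $\rho_{AR}$ (with $R=A'$) in the defining supremum~\eqref{eq:problem_statement_channel_relative}. So the content lies in the reverse inequality, i.e.\ showing that an arbitrary reference system can be replaced by one isomorphic to $\mathcal{H}_A$, and the input taken pure, without decreasing the relative entropy of the channel outputs.

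For the reverse direction I would fix an arbitrary bipartite state $\rho_{AR}$, write $\rho_A\coloneqq\operatorname{tr}_R[\rho_{AR}]$, and let $\psi_{AA'}$ be a purification of $\rho_A$ with $\mathcal{H}_{A'}\cong\mathcal{H}_A$ (possible since $\operatorname{rank}\rho_A\leq\dim\mathcal{H}_A$). The key step is to exhibit a channel $\mathcal{P}_{A'\to R}$ with $\mathcal{P}_{A'\to R}(\psi_{AA'})=\rho_{AR}$. To do so, take a purification $\phi_{ARR''}$ of $\rho_{AR}$ with $\mathcal{H}_{R''}$ chosen large enough that $\dim(\mathcal{H}_R\otimes\mathcal{H}_{R''})\geq\dim\mathcal{H}_A$; then $\phi_{ARR''}$ is also a purification of $\rho_A$ on the composite system $RR''$, so the unitary equivalence of purifications yields an isometry $V_{A'\to RR''}$ with $V\psi_{AA'}V^\dagger=\phi_{ARR''}$. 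Setting $\mathcal{P}_{A'\to R}(\omega)\coloneqq\operatorname{tr}_{R''}[V\omega V^\dagger]$ gives a channel with the required property, since partial-tracing $\phi_{ARR''}$ over $R''$ returns $\rho_{AR}$. Because $\mathcal{M}_{A\to B}$ acts on $A$ and $\mathcal{P}_{A'\to R}$ acts on the disjoint factor $A'$, the two maps commute, so $\mathcal{M}_{A\to B}(\rho_{AR})=\mathcal{P}_{A'\to R}(\mathcal{M}_{A\to B}(\psi_{AA'}))$, and likewise with $\mathcal{N}$ in place of $\mathcal{M}$.

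Applying the data-processing inequality for the quantum relative entropy under the channel $\mathcal{P}_{A'\to R}$ then gives
\[
D(\mathcal{M}_{A\to B}(\psi_{AA'})\Vert\mathcal{N}_{A\to B}(\psi_{AA'})) \geq D(\mathcal{M}_{A\to B}(\rho_{AR})\Vert\mathcal{N}_{A\to B}(\rho_{AR})).
\]
The left-hand side is at most $\sup_{\psi_{AA'}}D(\mathcal{M}_{A\to B}(\psi_{AA'})\Vert\mathcal{N}_{A\to B}(\psi_{AA'}))$, and since $\rho_{AR}$ was arbitrary, taking the supremum over $\rho_{AR}$ on the right yields $\sup_{\psi_{AA'}}(\cdots)\geq D(\mathcal{M}\Vert\mathcal{N})$. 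Combined with the first paragraph, this proves the claimed equality. The same argument goes through verbatim when $\mathcal{M}$ and $\mathcal{N}$ are merely completely positive maps, since $\mathcal{P}_{A'\to R}$ is a genuine channel and data processing for the relative entropy holds for the action of a channel on arbitrary positive semidefinite operators.

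I expect the main obstacle to be the careful handling of the reference-compression step: verifying that the channel $\mathcal{P}_{A'\to R}$ exists (which hinges on choosing the purifying system $R''$ large enough so that the unitary equivalence of purifications applies) and that it commutes with $\mathcal{M}_{A\to B}$ and $\mathcal{N}_{A\to B}$ as maps on disjoint tensor factors. The data-processing inequality itself is invoked as a black box, and the resulting bound on the dimension of the reference, $\dim\mathcal{H}_R \le \dim\mathcal{H}_A$, is exactly what is needed downstream in~\eqref{eq:final_optimization_problem}.
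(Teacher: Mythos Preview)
Your proof is correct and follows essentially the same approach as the paper's: both rely on purification, the data-processing inequality, and the Schmidt-rank bound to reduce the reference system to a copy of $A$. The paper's argument is more terse (purify $\rho_{AR}$ to a larger system, invoke DPI, then invoke Schmidt to shrink the reference), whereas you go directly from $\rho_{AR}$ to the pure state $\psi_{AA'}$ by explicitly constructing the channel $\mathcal{P}_{A'\to R}$ via unitary equivalence of purifications; the content is the same.
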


\begin{proof}
    This was proven in~\cite[Lemma~7]{Cooney_2016} (see also~\cite[Definition~II.2]{Leditzky_2018} and the discussion thereafter). 
    Consider the general definition of relative entropy of channels
    \begin{align}
    D(\mathcal{M} \Vert \mathcal{N}) \coloneqq  \sup_{\rho_{AR}} D(\mathcal{M}_{A\to B}(\rho_{AR})\Vert \mathcal{N}_{A\to B}(\rho_{AR})).
\end{align}
For each $\rho_{RA}$ we can find a purification $\psi_{\tilde{R}RA}$ in an ancilla system $\mathcal{H}_{\tilde{R}}$. Then after applying the data-processing inequality, it is immediate that we can optimize over pure states $\psi_{RA}$, because we can absorb $\tilde{R}$ always in $R$. By the Schmidt decomposition of a pure state $\psi_{RA}$ it is clear that the number of Schmidt coefficients is  bounded from above by $\operatorname{dim}{dim}\mathcal{H}_A$. Thus we can optimize without loss of generality over states $\psi_{A A}$ and choose $\mathcal{H}_{A} \cong \mathcal{H}_A$. 
\end{proof}

\begin{lemma}[\cite{Cooney_2016}]
\label{lem:conic_formulation_relative_channels}
    For the channel relative entropy~\eqref{eq:problem_statement_channel_relative} we have 
    \begin{equation}
    \begin{aligned}
         D(\mathcal{M}\Vert \mathcal{N}) = \sup D(\rho_{A}^{1/2}&\Gamma^\mathcal{N}_{A B}\rho_{A}^{1/2} \Vert \rho_{A}^{1/2}\Gamma^\mathcal{M}_{A B}\rho_{A}^{1/2})  \\
         \rho_{A} &\in \mathcal{S}(\mathcal{H}_{A})
    \end{aligned}
    \end{equation}
    where $\Gamma^\mathcal{N}_{A B}$ and $\Gamma^\mathcal{M}_{A B}$ are the Choi matrices for the channels $\mathcal{M}$ and $\mathcal{N}$, respectively.
\end{lemma}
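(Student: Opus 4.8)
The plan is to start from \cref{lem:optimization_restriction}, which already reduces the channel relative entropy to a supremum over pure states $\psi_{\tilde A A} \in \mathcal{S}(\mathcal{H}_{\tilde A} \otimes \mathcal{H}_A)$ with $\mathcal{H}_{\tilde A} \cong \mathcal{H}_A$. The key observation is that every pure state on $\mathcal{H}_{\tilde A} \otimes \mathcal{H}_A$ can be written as $|\psi\rangle_{\tilde A A} = (\rho_A^{1/2} \otimes \mathds{1}_{\tilde A})|\Gamma\rangle_{A\tilde A}$ (up to a local unitary on $\tilde A$), where $\rho_A$ is the reduced state on the channel-input system $A$ and $|\Gamma\rangle$ is the unnormalized maximally entangled vector from~\eqref{eq:Gamma-vec}; this is just the standard fact that the reduced state of a purification determines the purification up to an isometry on the purifying system. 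Since the relative entropy is invariant under unitaries applied to the purifying system $\tilde A$ (which here plays the role of the passive reference), we may parametrize the optimization over pure states by the single operator $\rho_A \in \mathcal{S}(\mathcal{H}_A)$.

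The main computational step is then to push the channels through this parametrization. First I would apply $\mathcal{N}_{A\to B}$ to $\psi_{\tilde A A}$: writing $\psi_{\tilde A A} = (\mathds{1}_{\tilde A} \otimes \rho_A^{1/2}) \Gamma_{\tilde A A} (\mathds{1}_{\tilde A} \otimes \rho_A^{1/2})$ and using that $\mathcal{N}_{A\to B}$ acts only on the $A$ system while $\rho_A^{1/2}$ commutes out of the channel only in the sense of the Choi trick, one gets $\mathcal{N}_{A\to B}(\psi_{\tilde A A}) = (\mathds{1}_{\tilde A}\otimes\text{(something)})$ — more precisely, by the defining property of the Choi matrix in~\eqref{eq:choi_isomorphism}, $(\operatorname{id}_{\tilde A}\otimes \mathcal{N}_{A\to B})(\Gamma_{\tilde A A})$ relabels to $\Gamma^{\mathcal{N}}_{\tilde A B}$, and conjugating by $\rho_A^{1/2}$ (which sits on the $\tilde A$-side after the relabeling because of the transpose appearing in the Choi relation) yields $\rho_{\tilde A}^{1/2}\Gamma^{\mathcal{N}}_{\tilde A B}\rho_{\tilde A}^{1/2}$, where $\rho_{\tilde A}$ denotes the copy of $\rho_A$ on $\mathcal{H}_{\tilde A}\cong\mathcal{H}_A$ (possibly transposed — I would track this carefully). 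The same manipulation applies verbatim to $\mathcal{M}$. Relabeling $\tilde A \to A$ gives the two arguments $\rho_A^{1/2}\Gamma^{\mathcal{N}}_{AB}\rho_A^{1/2}$ and $\rho_A^{1/2}\Gamma^{\mathcal{M}}_{AB}\rho_A^{1/2}$ appearing in the statement.

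Having identified the two output operators, the last step is purely bookkeeping: substitute into the relative-entropy functional and observe that the supremum over pure states $\psi_{\tilde A A}$ has become a supremum over $\rho_A \in \mathcal{S}(\mathcal{H}_A)$, with nothing lost since the map $\rho_A \mapsto \psi_{\tilde A A}$ is onto the set of pure states (modulo the irrelevant $\tilde A$-unitary). This gives exactly the claimed identity, and one can cite~\cite[Lemma~7]{Cooney_2016} or~\cite[Prop.~7.82]{khatriwilde2024} for the full details.

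The hard part will be getting the transpose/complex-conjugation conventions right: the Choi isomorphism as written in~\eqref{eq:choi_isomorphism} involves $(\operatorname{id}\otimes\mathcal{M})(\Gamma)$, and when one slides $\rho_A^{1/2}$ past the unnormalized Bell vector using the ``transpose trick'' $(\mathds{1}\otimes X)|\Gamma\rangle = (X^T\otimes\mathds{1})|\Gamma\rangle$, a transpose gets attached to $\rho_A$; one must check that this transpose is harmless because $\rho_A$ ranges over all of $\mathcal{S}(\mathcal{H}_A)$ and the set is closed under transposition, so the supremum is unaffected. Everything else — applying the channels, invariance of $D$ under isometries on the purifying system, the purification uniqueness lemma — is standard, so I would keep the argument brief and defer to the cited references for the conventions.
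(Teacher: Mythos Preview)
Your proposal is correct and follows essentially the same approach as the paper's proof: both start from \cref{lem:optimization_restriction}, write the optimal pure state as a local operator acting on one leg of $|\Gamma\rangle$, apply the channel to recover the Choi matrices, and use unitary invariance of the relative entropy to reduce the local operator to $\rho_A^{1/2}$. The only cosmetic difference is that the paper places the local operator $X$ on the reference leg from the outset (via the Hilbert--Schmidt isomorphism) and then invokes the polar decomposition $X=U\rho_A^{1/2}$, whereas you place $\rho_A^{1/2}$ on the channel-input leg and move it with the transpose trick; your observation that the transpose is harmless because $\mathcal{S}(\mathcal{H}_A)$ is closed under transposition is exactly what closes the argument.
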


\begin{proof}
    The proof is from~\cite[Lemma $6$]{Cooney_2016}. Let $\psi_{A A}$ be a pure state. From cyclicity of the maximally entangled operator $ \Gamma_{A A}$ 
    we find under the Hilbert--Schmidt isomorphism a local operator $X_{A \to A} = I(\psi_{A A})$ acting just on $\mathcal{H}_{A}$ such that
    \begin{align}
        (X_{A\to A} \otimes \mathds{1}_{A} )\Gamma_{A A}(X_{A\to A} \otimes \mathds{1}_{A} )^{\dag} = \psi_{A A}.
    \end{align}
    Additionally the Hilbert--Schmidt isomorphism $I:\mathcal{H}_{A} \otimes \mathcal{H}_A \to \mathcal{B(\mathcal{H}_{A},\mathcal{H}_A})$ yields 
    \begin{align}
        \langle \psi |\phi\rangle _{A A} = \tr[I(\psi_{A A})^\dag I(\phi_{A A})] 
    \end{align}
    for all $\psi_{A A},\phi_{A A} \in \mathcal{H}_{A} \otimes \mathcal{H}_A$.
    Thus we have $\tr[X^\dag X] = 1$ and applying the polar decomposition of $X$ yields a unitary $U$ such that $X = U \vert X\vert$. Now we can start the calculation. Let 
    \begin{align}
        & D(\mathcal{M}_{A\to B}(\psi_{A A}) \Vert \mathcal{N}_{A\to B}(\psi_{A A}) ) \notag \\
        &= D(\mathcal{M}_{A\to B} (X_{A} \Gamma_{AA}  X^\dag_{A}) \Vert  \mathcal{N}_{A\to B} (X_{A} \Gamma_{AA}  X^\dag_{A}) )\\
        &= D(X_{A}\Gamma_{A B}^{\mathcal{M}}X_{A}^\dag \Vert X_{A}\Gamma_{A B}^{\mathcal{N}}X_{A}^\dag ).
    \end{align}
    Choosing now $\rho_{A} \coloneqq  X^\dag X$ leads to $X = U \rho_{A}^{1/2}$ to be the unitary multiplied by the unique positive square root. Using the unitary invariance of the relative entropy yields the desired result.
\end{proof}


\begin{lemma}\label{lem:dual_program}
Let $J_{AB}$ and $H_A$ be Hermitian operators, and let $E\in \mathbb{R}$. Suppose there exists a state
$\rho_{A}$ satisfying the energy constraint $\operatorname{tr}[H_{A}\rho
_{A}]\leq E$. Then
\begin{align}
& \sup_{\rho_{A}\geq0}\left\{  \operatorname{tr}[J_{AB}\left(  \rho_{A}\otimes
I_{B}\right)  ]:\operatorname{tr}[H_{A}\rho_{A}]\leq E,\ \operatorname{tr}
[\rho_{A}]=1\right\}  \notag 
\\& =\inf_{\substack{y\geq0,\\x\in\mathbb{R}}}\left\{  x+yE:xI_{A}+yH_{A}\geq\operatorname{tr}
_{B}[J_{AB}]\right\}
\end{align}

\end{lemma}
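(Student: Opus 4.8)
The plan is to recognize the left-hand side as a semidefinite (more precisely, conic linear) program and to derive the right-hand side as its Lagrangian dual, then to verify strong duality via Slater's condition using the hypothesis that a feasible $\rho_A$ exists. First I would write the primal as a maximization of the linear functional $\rho_A \mapsto \operatorname{tr}[J_{AB}(\rho_A \otimes I_B)] = \operatorname{tr}[\operatorname{tr}_B[J_{AB}]\,\rho_A]$ over the spectrahedron defined by $\rho_A \geq 0$, $\operatorname{tr}[\rho_A] = 1$, and $\operatorname{tr}[H_A \rho_A] \leq E$. Introducing a multiplier $x \in \mathbb{R}$ for the equality constraint $\operatorname{tr}[\rho_A] = 1$ and a multiplier $y \geq 0$ for the inequality constraint $E - \operatorname{tr}[H_A \rho_A] \geq 0$, the Lagrangian is
\begin{align}
L(\rho_A, x, y) = \operatorname{tr}[\operatorname{tr}_B[J_{AB}]\,\rho_A] + x(1 - \operatorname{tr}[\rho_A]) + y(E - \operatorname{tr}[H_A \rho_A]).
\end{align}
Maximizing over $\rho_A \geq 0$ gives $+\infty$ unless $\operatorname{tr}_B[J_{AB}] - xI_A - yH_A \leq 0$, in which case the supremum is $x + yE$; this yields exactly the right-hand side as the dual problem.

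Next I would establish weak duality directly (the chain $\operatorname{tr}[\operatorname{tr}_B[J_{AB}]\rho_A] \leq \operatorname{tr}[(xI_A + yH_A)\rho_A] = x + y\operatorname{tr}[H_A\rho_A] \leq x + yE$ for any primal-feasible $\rho_A$ and dual-feasible $(x,y)$, using $y \geq 0$ and the energy constraint), which shows LHS $\leq$ RHS. For the reverse inequality I would invoke strong duality for conic programs: it suffices to exhibit a strictly feasible point for one of the two programs. The dual is easily seen to be strictly feasible — pick $y = 0$ and $x$ large enough that $xI_A > \operatorname{tr}_B[J_{AB}]$ strictly — so Slater's condition holds on the dual side, giving strong duality and attainment of the primal optimum. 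Alternatively, one can use the hypothesis that a state $\rho_A$ with $\operatorname{tr}[H_A\rho_A] \leq E$ exists together with the fact that the primal feasible set is nonempty and compact to guarantee the primal sup is finite and attained, and then conclude no duality gap. The hypothesis about the existence of a feasible $\rho_A$ is precisely what rules out the vacuous/infeasible primal, so it should be flagged as the ingredient making the statement well-posed.

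The main obstacle I anticipate is being careful about which Slater-type regularity condition to cite and on which side: since the primal variable $\rho_A$ is finite-dimensional and its feasible set is compact, primal attainment is automatic once feasibility is assumed, but strong duality still requires a constraint qualification. I would resolve this by checking strict feasibility of the dual (which is unconditional, as noted above) so that the standard conic duality theorem applies, giving zero duality gap and hence equality of the two optimal values. A minor secondary point is handling the partial trace cleanly — namely the identity $\operatorname{tr}[J_{AB}(\rho_A \otimes I_B)] = \operatorname{tr}[\operatorname{tr}_B[J_{AB}]\,\rho_A]$ — which is routine but should be stated explicitly since it is what reduces the whole problem to an SDP purely on system $A$.
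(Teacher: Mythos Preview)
Your approach is essentially the paper's: form the Lagrangian, compute the inner supremum over $\rho_A \geq 0$ to obtain the dual, and then close the gap by exhibiting a strictly feasible dual point together with primal feasibility. One small technical caveat: your proposed Slater point $y=0$ lies on the boundary of the cone constraint $y\geq 0$, so in the standard conic-duality formulation it does not count as strictly feasible; the paper sidesteps this by taking $y=\varepsilon>0$ and then $x>\lambda_{\max}\!\big(\operatorname{tr}_B[J_{AB}]-\varepsilon H_A\big)$, while the hypothesis on the existence of a feasible $\rho_A$ is used exactly as you say---to guarantee the primal is nonempty (equivalently, that the dual value is not $-\infty$). Your alternative route via compactness of the primal feasible set would also work, but you would need to invoke a minimax theorem (e.g., Sion) rather than just primal attainment, since attainment alone does not imply zero duality gap.
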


\begin{proof}
Consider that
\begin{align}
&  \sup_{\rho_{A}\geq0}\left\{  \operatorname{tr}[J_{AB}\left(  \rho
_{A}\otimes I_{B}\right)  ]:\operatorname{tr}[H_{A}\rho_{A}]\leq
E,\ \operatorname{tr}[\rho_{A}]=1\right\}  \nonumber\\
&  =\sup_{\rho_{A}\geq0}\!\left\{  \operatorname{tr}[J_{AB}\!\left(  \rho
_{A}\otimes I_{B}\right)  ]+\inf_{\substack{y\geq0,\\x\in\mathbb{R}}} \left\{ \begin{array}{c}
       y\left(
E-\operatorname{tr}[H_{A}\rho_{A}]\right) \\
      +x\left(  1-\operatorname{tr}
[\rho_{A}]\right)
\end{array}   \right\} \right\}  \notag \\
&  =\sup_{\rho_{A}\geq0}\inf_{\substack{y\geq0,\\x\in\mathbb{R}}}\left\{  \begin{array}{c}
       \operatorname{tr}
[J_{AB}\left(  \rho_{A}\otimes I_{B}\right)  ]+y\left(  E-\operatorname{tr}
[H_{A}\rho_{A}]\right) \\
      +x\left(  1-\operatorname{tr}[\rho_{A}]\right)
\end{array} 
\right\} \notag  \\
&  =\sup_{\rho_{A}\geq0}\inf_{\substack{y\geq0,\\x\in\mathbb{R}}}\left\{
x+yE+\operatorname{tr}[\left(  \operatorname{tr}_{B}[J_{AB}]-xI_{A}
-yH_{A}\right)  \rho_{A}]\right\} \notag  \\
&  \leq\inf_{\substack{y\geq0,\\x\in\mathbb{R}}}\sup_{\rho_{A}\geq0}\left\{
x+yE+\operatorname{tr}[\left(  \operatorname{tr}_{B}[J_{AB}]-xI_{A}
-yH_{A}\right)  \rho_{A}]\right\} \notag  \\
&  =\inf_{\substack{y\geq0,\\x\in\mathbb{R}}}\!\left\{  x+yE+\sup_{\rho_{A}\geq0}\left\{
\operatorname{tr}[\left(  \operatorname{tr}_{B}[J_{AB}]-xI_{A}-yH_{A}\right)
\rho_{A}]\right\}  \right\} \notag  \\
&  =\inf_{\substack{y\geq0,\\x\in\mathbb{R}}}\left\{  x+yE:xI_{A}+yH_{A}\geq
\operatorname{tr}_{B}[J_{AB}]\right\}  .
\end{align}
A feasible choice for the maximization problem is $\rho_{A}=|\psi
\rangle\!\langle\psi|_{A}$, where $|\psi\rangle$ is the eigenvector
corresponding to the minimum eigenvalue of $H$. Fix $\varepsilon,\delta>0$. A
strictly feasible choice for the minimization problem is
\begin{align}
y &  =\varepsilon , \\
x &  =\left\vert \lambda_{\max}\left(  \operatorname{tr}_{B}[J_{AB}]-yH_{A}\right)
\right\vert +\delta .
\end{align}
Thus, strong duality holds, and the inequality above is saturated.
\end{proof}

\end{document}